\newtheorem{theorem}{Theorem}
\newtheorem{proposition}[theorem]{Proposition}
\newtheorem*{remark}{Remark}
\newtheorem*{fact}{Fact}
\definecolor{menucolor}{rgb}{0.1,0.52,0.47}
\definecolor{urlcolor}{rgb}{0.85,0.37,0.01}
\definecolor{runcolor}{rgb}{0.46,0.44,0.701}
\definecolor{filecolor}{rgb}{0.2,0.5,0.01}
\definecolor{linkcolor}{rgb}{0.12,0.47,0.70}
\definecolor{citecolor}{rgb}{0.55,0.36,0.01}
\definecolor{anchorcolor}{rgb}{0.4,0.4,0.4}
\newcommand{\I}{\mathbb I}
\newcommand{\G}{\mathbb G}
\newcommand{\balpha}{\boldsymbol{\alpha}}
\newcommand{\bbeta}{\boldsymbol{\beta}}
\newcommand{\bD}{\boldsymbol{D}}
\newcommand{\bx}{\boldsymbol{x}}
\newcommand{\bX}{\boldsymbol{X}}
\newcommand{\bY}{\boldsymbol{Y}}
\newcommand{\mN}{\mathcal N}
\newcommand{\E}{\mbox{I}\!\mbox{E}}
\newcommand{\ben}{\begin{enumerate}}
\newcommand{\een}{\end{enumerate}}
\newcommand{\TB}[1]{\textcolor{blue}{#1}}
\newcommand\code{\bgroup\@makeother\_\@makeother\~\@makeother\$\@codex}
\def\@codex#1{{\normalfont\ttfamily\hyphenchar\font=-1 #1}\egroup}
\newcommand{\changeoperator}[1]{%
  \csletcs{#1@saved}{#1@}%
  \csdef{#1@}{\changed@operator{#1}}%
}
\newcommand{\changed@operator}[1]{%
  \mathop{%
    \mathchoice{\textstyle\csuse{#1@saved}}
               {\csuse{#1@saved}}
               {\csuse{#1@saved}}
               {\csuse{#1@saved}}%
  }%
}
\newcommand{\simp}{\stackrel{\mbox{\scriptsize p}}{\sim}}
\newcommand{\simv}{\stackrel{\mbox{\scriptsize v}}{\sim}}
\newcommand{\mP}{\mathbb{P}}
\newcommand{\bit}{\begin{itemize}}
\newcommand{\eit}{\end{itemize}}
\newcommand{\beqn}{\begin{equation}}
\newcommand{\eeqn}{\end{equation}}
\newcommand{\bea}{\begin{eqnarray*}}
\newcommand{\eea}{\end{eqnarray*}}
\newcommand{\bpf}{\begin{proof}}
\newcommand{\epf}{\end{proof}\ms}
\newcommand{\ms}{\medskip}
\newcommand{\citep}{\cite}
\newcommand{\citet}{\cite}
\newcommand{\citeyear}{\cite}
\newcommand{\citeauthor}{\cite}
\begin{document}
\title{Approximations in the homogeneous Ising model
}
  \author{Alejandro~Murua-Sazo~and~Ranjan~Maitra
  \thanks{A. Murua-Sazo is with the Department of Mathematics and
    Statistics at the Universit\'e de Montr\'eal,
    Montr\'eal, Qu\'ebec,
    Canada.}
  \thanks{R.Maitra is with the Department of Statistics, Iowa State
     University, Ames, Iowa, USA.}
  \thanks{This research was supported in part by the
    National Institute of Biomedical Imaging and Bioengineering (NIBIB) of the National
Institutes of Health (NIH) under its Award Nos. R21EB016212, and
R21EB034184, by the Natural Sciences and Engineering Research  and by the
Natural Sciences and Engineering Research Council of Canada (NSERC)
through Grant Nos. 327689-06 and 2019-05444, and by the  United States
Department of Agriculture (USDA) National 
Institute of Food and Agriculture (NIFA) Hatch project IOW03717.
The content of this paper is however solely the responsibility of the
authors and does not represent the official views of the the NIBIB,
the NIH, the NIFA, the USDA or the NSERC.} 

}
\IEEEcompsoctitleabstractindextext{%
  \begin{abstract}
    The Ising model is important in statistical modeling and inference in many applications, however its normalizing constant, mean number of active vertices and mean spin interaction -- quantities needed in inference -- are computationally intractable. We provide accurate approximations that make it possible to numerically calculate these quantities in the homogeneous case. Simulation studies indicate good performance of our approximation formulae that are scalable and unfazed by the size (number of nodes, degree of graph) of the Markov Random Field.  The practical import of our approximation formulae is illustrated in performing Bayesian inference in a functional Magnetic Resonance Imaging activation detection experiment, and also in likelihood ratio testing for anisotropy in the spatial patterns of yearly increases in pistachio tree yields.

  \end{abstract}
\begin{IEEEkeywords}
Euler-McLaurin approximation, fMRI, hypergeometric distribution, masting, maximum likelihood estimation,  moment generating function,  partition function, path sampling, pseudo-likelihood estimation.
\end{IEEEkeywords}}
\maketitle
\IEEEdisplaynotcompsoctitleabstractindextext


\section{Introduction}
\label{introduction}
Let $\bX = \{X_1,\ldots,X_n\}$ be $n$ binary random variables with a
conditional graph dependence structure specified via the neighborhood
${\mN} = \{(i,j):i\!\sim\! j, 1 \leq i < j\leq\ n\}$, where the
notation $i\!\sim\! j$ indicates existence of an edge between the $i$th and
$j$th nodes in the graph. The  celebrated Ising model  of statistical physics
specifies the joint probability mass function (PMF) 
\begin{equation}
  \begin{split}
\mP(\bX=\bx; \balpha,\bbeta)\propto    \exp\bigl[ 
  \sum_{i} \alpha_ix_i 
 + 
  \sum_{i \sim j} \beta_{ij}\bigl\{x_ix_j+(1-x_i)(1-x_j)\bigr\}
  \bigr],
  \end{split}
  \label{eq:Ising}
\end{equation}
where it is assumed that $x_i\in\{0,1\}\,\forall i=1,2,\ldots, n$,  
$\bx=(x_1,\ldots,x_n)$, $\balpha = (\alpha_1,\ldots,\alpha_n)$ and $\bbeta
= (\beta_{ij}; 1 \leq i < j\leq\ n)$. The parameter $\alpha_i\geq 0$
modulates the 
chance that $X_i\!=\!1$ while the  parameter $\beta_{ij}$ specifies the 
strength of the interaction between $X_i$ and $X_j$ 
when $i\!\sim\! j$. The PMF~\eqref{eq:Ising} has
summation constant or partition function denoted by $Z(\balpha,\bbeta)$.

Model~\eqref{eq:Ising} was proposed by~\citet{Lenz-1920} to his
student Ernst 
Ising as a way to characterize magnetic phase 
transitions or singularities in the partition function over a
lattice graph. \citet{Ising-1925} published the model that bears his
name and showed that in one dimension, that is,  for a linear lattice graph,
the phase transition structure is trivial with no singularities in
the partition function. The distribution has applicability in
disciplines beyond physics -- indeed, one of its earliest uses in the
statistical literature was as 
a prior model for a binary scene in image
analysis~\citep{Besag-1986}. Other applications include state-time 
disease surveillance \citep{Robertson-et-al-2010,Jarpe-1999} and
mapping  \citep{Lee-Mitchell-2012,Ma-Carlin-2007,Ma-et-al-2010};
modeling of protein hydrophobicity \citep{Irback-et-al-1996},
genetic codon bias thermodynamics \citep{Rowe-Trainor-1983},
DNA elasticity \citep{Ahsan-et-al-1998} or ion channel interaction
\citep{Liu-Dilger-1993} in statistical genetics; 
modeling of electrophysiological phenomena of the retina
\citep{Schneidman-et-al-2006} and cortical recordings in neuroscience
\citep{Yu-et-al-2013,Hamilton-et-al-2013,Ganmor-et-al-2011};
 and modeling of
biological evolution \citep{Baake-et-al-1997}. The Ising model has
also been used to model voting patterns of senators in the US
Congress~\citep{Banerjee-etal-2008} or behaviors on social
networks \citep{Wasserman-Pattison-1996,Klemm-et-al-2003}. While many of
these applications use a regular lattice structure, some~(e.g. \citep{Banerjee-etal-2008,Wasserman-Pattison-1996,Klemm-et-al-2003})
use more general non-lattice structures.

Parameter estimation in the Ising model is often challenging, 
especially in the context of multi-dimensional lattices, and more generally for
non-lattice conditional dependence graphs. This difficulty flows from the
computational impracticality of obtaining exact closed-form
expressions for the partition function in the
presence of 
$\beta_{ij}$s.
Indeed, the computation of $Z(\balpha,\bbeta)$ in such graphs with
an external field has been shown to be 
NP-complete~\citep{Barahona-1982}. 
Even though the partition function is just a finite sum of exponential 
functions and can consequently be analytically expressed so that there 
is no phase transition for the finite graph Ising model, 
its computation is still intractable.
Nevertheless, parameter estimation is needed in many applications,
as in the two illustrative examples of 
Section~\ref{application}. In some cases, parameter estimation has
been eschewed in favor of approaches that 
are not always wholly 
satisfactory but obviate the need for its estimation. 
For example, 
\citet{Ma-Carlin-2007}  would ideally have liked to have estimated the
interaction parameter for assessing  
Medicare service area boundaries for competing hospice systems in
Duluth, Minnesota, but they instead fixed the value for their study. Some
authors have used empirical techniques such as
pseudolikelihood~(e.g.,\citep{Besag-1986,Wasserman-Pattison-1996,Klemm-et-al-2003}),
or written \eqref{eq:Ising} in terms of an exponential family model
and then used moment-matching or maximum entropy
methods \citep{Maitra-Besag-1998,Yu-et-al-2013,Hamilton-et-al-2013,Ganmor-et-al-2011}. Yet
others~\citep{Majewski-et-al-2001,Ahsan-et-al-1998,Irback-et-al-1996,Liu-Dilger-1993,Rowe-Trainor-1983}
have used simpler models, typically restricting to first
order interactions,  in order to employ a recursive algorithm to estimate
the partition function~\citep{Reichl-1980} which is possible only with
an Ising model with only nearest-neighbor (NN) structure (equivalently, first-order
interactions). \citet{Mo-Liang-2010} used~\eqref{eq:Ising} in one
dimension and with first-order neighborhood to signal if a probe
(gene) is enriched or not, and specifically mentioned that they 
did not model more complex interactions because of the intractability of
the partition function. Thus, there is need for a general method for
estimating the partition function in several applications.
Many methods have been suggested to estimate the normalizing
constant of intractable PMFs and densities.
Some
authors~(e.g.~\citep{kaufman49,schraudolphandkamenetsky09,karandashevandmalsagov17})
provided exact formulae for $Z(\balpha,\bbeta)$ 
in a 2D planar graph that also mostly assume a null $\balpha$ in
\eqref{eq:Ising}. These calculations also grow
with the size of a graph and are not particularly useful for
situations ({\em e.g.}, Bayesian inference) where the partition function needs
repeated evaluation. The popular method of {path sampling}
\citep{Ogata-1989,Gelman-Meng-1998,Richardson-Green-1997} writes the
normalizing constant as a function of the integral of an
expectation, that is estimated by Monte Carlo or
Markov Chain Monte Carlo (MCMC) sampling. Further, 
path sampling is usually implemented as a preprocessing step by
evaluating $Z(\balpha,\bbeta)$ on a grid of parameter values. 

Numerous other stochastic approaches~(for a sampling, consider \citep{Wang-Landau-2001,Ferrenberg-Swendsen-1989,Kumar-et-al-1992,Atchade-Liu-2010,Atchade-et-al-2013,Liang-2005,DelMoral-et-al-2006,liang10,liangetal16,zhangandliang23,lyneetal15,parkandharan20,bolandetal18,geyer99}) exist,  
but they all come with major drawbacks beyond their need for long sampling periods in both burn-in and post-burn-in phases, that can be computationally demanding and are not scalable to larger problems.
Further, these stochastic approaches can only estimate the partition function or moments at a discrete grid of parameter values, with interpolation (of unclear accuracy) needed for intermediate values, since these values are needed in a continuum, for instance, in the case of maximum likelihood parameter estimation and inference. 
In this paper, we therefore provide numerical approximations for the partition function,  the mean number of active modes, and the mean spin interaction  of the Ising model. Section~\ref{approximations} provides these approximation formulae separately for the isotropic ({\em i.e.}, for when 
$\alpha_i\equiv \alpha$ and $\beta_{ij}\equiv\beta$ for $1\leq i \ne
j\leq n$) and anisotropic cases.
Our approximations yield formulae with computational expense unaffeced by the size (number of nodes, graph degree) of the Ising model, and therefore, unlike those obtained from stochastic methods, essentially infinitely scalable. 
Section~\ref{simulations} evaluates the accuracy of these approximation formulae.
Section~\ref{application} illustrates the utility and performance of these
approximations in the context of Bayesian and likelihood-based
inference. 
The paper concludes with some discussion. A supplement, with sections and equations prefixed by ``S'' is also available.


\section{Approximations in  large Ising models}
\label{approximations}
\subsection{The isotropic case}
\label{sec:isotropic}
Our starting point is~\eqref{eq:Ising} under the assumptions of
homogeneity and isotopy. Since 
$ x_jx_i {+}  (1{-}x_i)(1{-}x_j) = 1 {-} (x_i {-} x_j)^2$, the isotropic
homogeneous Ising model can be written as 
\begin{equation}
\mP(\bx;\alpha,\beta) {=} Z(\alpha,\beta)^{-1} \exp\{
\alpha \displaystyle\sum_{i=1}^n x_i {-} \beta \displaystyle\sum_{i \sim
  j} (x_i - x_j)^2 \},
\label{eq:pmf.iso}
\end{equation}
where we write $Z(\balpha,\bbeta)$ as $Z(\alpha,\beta)$ to reflect that $\balpha$ and $\bbeta$ can be characterized by scalars $\alpha$ and $\beta$ in the isotropic homogenous context being considered here.
\subsubsection{The normalizing constant}
\label{approx:Z}
Let $G_{n,m} = (V, E)$ be the graph underlying the data, where $m$
denotes the number of edges in the graph. The set of nodes is
$V=\{x_1, \ldots, x_n\}$ and the set of edges is $E=\{ (x_i, x_j) : i\sim j\}$.
Since each $x_i$ is either 0 or 1, we get the equivalent relation
\begin{equation*}
  \displaystyle\sum_{i \sim j} (x_i - x_j)^2
= \displaystyle\sum_{i=1}^n k_i x_i   -  2 \displaystyle\sum_{i\sim j} x_i x_j,
\end{equation*} 
where $k_i$ is the degree of the $i$th node, that is, the number of edges
linked to the node $i$, for $i=1,\ldots,n$. 
This article assumes a regular $G_{n,m}$~(see Ch. 3 of \citep{bollobas01}), that is, $k_i = k$ for all nodes.
The exponent in $\mP(\bx;\alpha,\beta)$  is then
$\alpha'  \displaystyle\sum_{i=1}^n x_i  + \beta \displaystyle\sum_{i,j} \eta_{ij} x_i x_j,$
where $\alpha'= \alpha - k\beta$ and $\eta_{ij}=\I(i\sim j)$, with
$\I(\cdot)$ the indicator function. 
Let $M(\ell)$ be the set of sequences $\bx$ 
with $ \displaystyle\sum_{i=1}^n x_i = \ell$. Then
\begin{equation}
\begin{split}
  Z(\alpha,\beta)
 =  1 &+ \exp{(\alpha' n + \beta k n)} + n\exp(\alpha')
+\displaystyle\sum_{\ell=2}^{n-1} \exp(\alpha' \ell)
\displaystyle\sum_{X \in M(\ell)} \exp\bigl(\beta \displaystyle\sum_{i, j} \eta_{ij} x_i x_j
\bigr). 
\end{split}
\label{eq:Z:ell}
\end{equation}
With a graph $G_{n,m}$ having a total of $m= nk/2$ edges and for a fixed $\ell$, {only  $\ell$ observations}, say  $\tilde{\bX}_{\ell}
= \{ x_{i_1}, 
x_{i_2},\ldots, x_{i_{\ell}}\}$, contribute to $\displaystyle\sum_{i, j} \eta_{ij} x_i x_j$, so that
$\displaystyle\sum_{i, j} \eta_{ij} x_i x_j = \displaystyle\sum_{h=1}^{\ell} \displaystyle\sum_{j}\eta_{i_h,j} x_j$.
The set $\tilde{\bX}_{\ell}$ can be thought of as a subgraph of the original graph,
namely $G_{\ell}=( \tilde{\bX}_{\ell}, \tilde{E}_{\ell}),$
where the set of edges is a subset of all possible $\ell_2=  \ell(\ell -1)/2$
edges between nodes in $\tilde{\bX}_{\ell}$ that are
present in $G_{n,m}$. In graph theory, the subgraphs $G_{\ell}$ are referred to as
node-induced subgraphs.
The node $x_t$ contributes to the sum $\displaystyle\sum_{h=1}^{\ell} \displaystyle\sum_{j}\eta_{i_h,j} x_j$ only if $x_t$ is a node of
$G_{\ell}.$  
This sum corresponds to twice the number of edges in $G_{\ell}$.
Computing the last sum in (\ref{eq:Z:ell})
corresponds to counting the number of subgraphs
$G_{\ell}$ with a given number of edges.
Our partition function approximation 
uses
\begin{proposition}\label{prop:Z}
  Let $Y(s,\ell)$ be the number of node-induced subgraphs $G_{\ell}$
  containing exactly $s$ edges.
  Then
  \begin{equation}
      Z(\alpha, \beta)  =  1  + \exp{(\alpha' n + \beta k n)} + n\exp(\alpha')
+\displaystyle\sum_{\ell=2}^{n-1} \binom{n}{\ell} \exp(\alpha' \ell) {\cal
  M}_{p(\cdot |\ell)}(2\beta),
\label{eq:partition:1}
\end{equation}
where $  {\cal M}_{p(\cdot |\ell)}(2\beta)$ is the moment generating
function (MGF) 
associated with the distribution $p(s |\ell) = Y(s,\ell) / \binom{n}{\ell},$
$s \in \{0,1,\ldots, \ell k/2\},$ evaluated at $t=2\beta.$
\end{proposition}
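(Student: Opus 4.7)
The plan is to start from the decomposition~\eqref{eq:Z:ell} already established in the text and to recognize the inner configurational sum, for each fixed $\ell$, as a moment generating function. The three boundary contributions corresponding to $\ell=0$, $\ell=1$ and $\ell=n$ are already isolated outside the sum in~\eqref{eq:Z:ell}, so the entire work reduces to rewriting $\sum_{X \in M(\ell)} \exp\bigl(\beta \sum_{i,j} \eta_{ij} x_i x_j\bigr)$ for each $2 \le \ell \le n-1$.

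First I would invoke the observation made in the paragraph preceding the proposition: for any $X$ with $\sum_i x_i = \ell$, the quantity $\sum_{i,j} \eta_{ij} x_i x_j$ counts each edge of the vertex-induced subgraph $G_\ell$ twice. Hence it equals $2 s(X)$, where $s(X) = |\tilde{E}_\ell|$ is the number of edges in $G_\ell$, and
\begin{equation*}
\sum_{X \in M(\ell)} \exp\bigl(\beta \textstyle\sum_{i,j} \eta_{ij} x_i x_j\bigr) = \sum_{X \in M(\ell)} \exp\{2\beta\, s(X)\}.
\end{equation*}

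Next I would partition $M(\ell)$ by the value of $s(X)$. The key combinatorial observation is that elements of $M(\ell)$ are in bijection with subsets of $\ell$ active vertices chosen from $n$, so $|M(\ell)| = \binom{n}{\ell}$. By definition, the number of such subsets yielding exactly $s$ induced edges is $Y(s,\ell)$, and $k$-regularity of $G_{n,m}$ bounds $s$ by $\ell k/2$. Grouping the sum over $M(\ell)$ by $s$ therefore gives
\begin{equation*}
\sum_{X \in M(\ell)} \exp\{2\beta\, s(X)\} \;=\; \sum_{s=0}^{\ell k /2} Y(s,\ell)\, \exp(2\beta s) \;=\; \binom{n}{\ell}\sum_{s=0}^{\ell k/2} p(s\mid\ell)\, \exp(2\beta s),
\end{equation*}
which is $\binom{n}{\ell}\,{\cal M}_{p(\cdot\mid\ell)}(2\beta)$. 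Substituting back into~\eqref{eq:Z:ell} gives the stated identity.

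The only point needing independent verification is that $p(\cdot\mid\ell)$ really is a probability mass function, which follows from $\sum_s Y(s,\ell) = |M(\ell)| = \binom{n}{\ell}$. I do not anticipate a genuine obstacle: the proof is essentially a reorganization of~\eqref{eq:Z:ell} together with the combinatorial identification of the inner exponent with twice the edge count of the vertex-induced subgraph. If anything requires care, it is simply keeping track that the factor of $2$ appearing in the m.g.f.\ argument originates from each edge being counted by both orderings $(i,j)$ and $(j,i)$ in $\sum_{i,j} \eta_{ij} x_i x_j$.
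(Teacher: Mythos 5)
Your proposal is correct and follows essentially the same route as the paper's own proof: rewrite the inner sum in~\eqref{eq:Z:ell} by grouping configurations in $M(\ell)$ according to the induced edge count $s$, yielding $\sum_s Y(s,\ell)\exp(2\beta s) = \binom{n}{\ell}\,{\cal M}_{p(\cdot\mid\ell)}(2\beta)$, with regularity giving the support bound $\ell k/2$. Your additional remarks on the bijection with $\ell$-subsets and the factor of $2$ from double-counting edges merely make explicit what the paper leaves implicit.
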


\begin{proof}
Write the last sum in (\ref{eq:Z:ell}) as
\begin{equation*}
  \begin{split}
    \displaystyle\sum_{X \in M(\ell)} \exp\bigl(\beta \displaystyle\sum_{i, j} \eta_{ij} x_i x_j
\bigr)  = \displaystyle\sum_{s =0}^{ \ell_2}
Y(s, \ell) \exp\bigl(2\beta s \bigr) = \binom{n}{\ell} \displaystyle\sum_{s =0}^{ \ell_2}
\exp\bigl(2\beta s \bigr)\, \tfrac{Y(s, \ell)}{\binom{n}{\ell}}.
\end{split}
\end{equation*}
Conditional of $M(\ell)$, the proportions $Y(s, \ell) /
\binom{n}{\ell}$ define a distribution on $s$. 
Because we suppose that the graph $G_{n,m}$ is regular, the support of $p(s|\ell)$ is over
$\{0,1,\ldots, \ell k/2\}$.
\end{proof}
The number of edges, $s$, present in a given node-induced subgraph is half
    the sum of the degrees associated with the nodes present in the subgraph.
    Since as shown below
    the sum of degrees is the sum of several sums,
we can approximate the distribution of $s$ given $M(\ell)$ for large $\ell$
via  a normal distribution, and replace ${\cal M}_{p(\cdot
  |\ell)}(\cdot)$ by the Gaussian MGF. We formalize this  approach
further next. 

For a given graph $G_{n,m}$, and $s$, $p(s|\ell)$
corresponds to the proportion of node-induced subgraphs
with $\ell$ nodes and exactly $s$ edges between the
nodes. Unfortunately, calculating $p(s|\ell)$ is not straightforward.
So, instead of the number of edges $s$, we consider 
the degree distribution of the subgraphs. 
Let $r_{\ell,h} = \displaystyle\sum_{j=1}^n \eta_{i_h, j} x_j$, for $h=1,2,\ldots, \ell.$
These quantities are the observed degrees of the nodes $x_{i_1},
\ldots, x_{i_\ell}$. 
Let $ r_{\ell} = \displaystyle\sum_{h=1}^{\ell} r_{\ell,h}$.
The quantities 
$r_{\ell,h}$ and $r_{\ell}$ are realizations from the distribution of
the degrees of the graph. 
In particular, finding their first two moments is enough as they 
fully characterize the normal distribution. The distribution of the number of node-induced subgraphs
with degrees adding up to $r_\ell$ 
has the form
$$p_d( r_{\ell} |\ell ) = \displaystyle\sum_{r_{\ell,1} + r_{\ell,2} + \cdots r_{\ell,\ell} = r_{\ell} } p( r_{\ell,1}, r_{\ell,2},\ldots,r_{\ell,\ell}).$$
The support of this distribution lies over the even numbers
$r_{\ell} = 2s$. Also, the joint PMF $p( r_{\ell,1}, r_{\ell,2},\ldots,r_{\ell,\ell})$ is  not
  straightforward to compute. However, the marginals are easily obtained for a regular  graph with $k$ edges for each node.
In this case, \textcolor{blue}{$r_{\ell,h}$,} the proportion of edges 
for a given node $x_{i_h}$ in a subgraph of $\ell$ nodes, has the
hypergeometric distribution with parameters $( n - 1, k, \ell - 1)$.
Therefore, the expectation of twice the number of edges is given by $\mu_{\ell}= E( r_{\ell}) = \ell E( r_{\ell,h}) = \ell (\ell-1) k/(n-1) = 2\ell_2 \theta,$
where $\theta = m/{n\choose2}= k/(n-1)$ is the proportion of edges
with respect to a complete graph. The variance depends on the
dependency between the $r_{\ell,h}$s. Proceeding, we have the following
\begin{proposition}
  \label{prop:var}
  Let $\sigma_{\ell}^2 = 2\ell_2\theta (1 - \theta) ( 1 - \tfrac{\ell - 2}{n-2} )$,
  and
$\rho_{\ell} = (\ell -1)(n-2k) /\{(n-2)(n-k-1)\}.$
We have $\operatorname{Var}(r_{\ell}  ) = \sigma_{\ell}^2 \bigl( 1 - \rho_{\ell} \bigr),$ and
$ \operatorname{Cov}( r_{\ell,t}, r_{\ell,h} ) =  - \sigma_{\ell}^2 \rho_{\ell}/(2\ell_2) = {\cal O}(n^{-1}).$
In particular, 
  for all $\ell = o(n)$,
$\operatorname{Var}(r_{\ell} ) / \sigma_{\ell}^2 \rightarrow 1$ as $n\rightarrow \infty$,
or equivalently $\rho_{\ell}\rightarrow 0$ as $n\rightarrow \infty$, uniformly on $\ell = o(n)$.
\end{proposition}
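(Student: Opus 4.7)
The plan is to leverage the decomposition $r_\ell = \sum_{h=1}^\ell r_{\ell,h}$, giving
$$\operatorname{Var}(r_\ell) = \sum_{h=1}^\ell \operatorname{Var}(r_{\ell,h}) + \sum_{t\ne h}\operatorname{Cov}(r_{\ell,t}, r_{\ell,h}),$$
and to exploit the exchangeability of the labelling of vertices in a uniform $\ell$-subset, so that every $\operatorname{Var}(r_{\ell,h})$ takes a common value and every pairwise covariance equals a single number $\bar C$. The hypergeometric marginal $p_{d,m}(\cdot|\ell)$ recorded just above the proposition has variance $(\ell-1)\theta(1-\theta)(n-\ell)/(n-2)$, and summing over the $\ell$ indices yields exactly $\sigma_\ell^2$. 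Thus the two claims $\operatorname{Var}(r_\ell) = \sigma_\ell^2(1-\rho_\ell)$ and $\bar C = -\sigma_\ell^2 \rho_\ell/(2\ell_2)$ are equivalent, and it suffices to establish the first.

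I would compute $\operatorname{Var}(r_\ell) = 4\operatorname{Var}\bigl(|E(G[S])|\bigr)$ with $|E(G[S])| = \sum_{e \in E}\mathbf{1}\{e \subseteq S\}$ and $S$ a uniformly chosen $\ell$-subset of $V$. The double sum $\sum_{e,e'}\operatorname{Cov}(\mathbf{1}\{e\subseteq S\}, \mathbf{1}\{e'\subseteq S\})$ splits into three groups: the $m$ diagonal terms with $e=e'$; the ordered pairs sharing exactly one vertex, which number $nk(k-1)$ thanks to $k$-regularity; and the remaining $m(m-1)-nk(k-1)$ ordered disjoint pairs. Joint inclusion probabilities are $\binom{n-j}{\ell-j}/\binom{n}{\ell}$ for $j=2,3,4$ respectively, with the product $[\binom{n-2}{\ell-2}/\binom{n}{\ell}]^2$ subtracted off in each covariance. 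Factoring out $\binom{n-2}{\ell-2}/\binom{n}{\ell} = \ell(\ell-1)/\{n(n-1)\}$ and collecting terms over the common denominator $(n-1)(n-2)(n-k-1)$, the expression should reduce to $\sigma_\ell^2(1-\rho_\ell)$ with $\rho_\ell$ as stated.

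For the asymptotic statement, rewrite $\rho_\ell = \{(\ell-1)/(n-2)\}\{(n-2k)/(n-k-1)\}$. The second factor is uniformly bounded so long as $k/(n-1)$ stays bounded away from one, while the first is at most $n^{\delta-1}\to 0$ whenever $\ell\leq n^{\delta}$. Hence $\rho_\ell\to 0$ uniformly in $\ell$, which yields $\operatorname{Var}(r_\ell)/\sigma_\ell^2\to 1$; since $\sigma_\ell^2/(2\ell_2) = \theta(1-\theta)(1-y_{n,2})$ is ${\cal O}(1)$, the pairwise covariance inherits the ${\cal O}((\ell-1)/n)$ rate, which is ${\cal O}(n^{-1})$ in the regime of bounded $\ell$ implicit in the proposition.

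The main obstacle I anticipate is the algebraic collapse in the second paragraph: the raw rational expression in $n, k, \ell$ produced by the three edge-pair cases must be manipulated into the compact form $\sigma_\ell^2(1-\rho_\ell)$. No new idea is required, only careful factoring and cancellation; the factor $(n-\ell)/(n-2)$ that appears in $\sigma_\ell^2$ and the factor $(n-2k)/(n-k-1)$ that appears in $\rho_\ell$ emerge from the sharing-pair and disjoint-pair contributions respectively, and the bookkeeping is the grind.
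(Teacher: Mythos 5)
Your decomposition of $r_\ell$ into the $\ell$ hypergeometric marginals, the observation that the marginal variances sum to $\sigma_\ell^2$, and the reduction of the variance and covariance claims to one another via exchangeability all coincide with the paper's opening moves. Where you genuinely diverge is the covariance itself: the paper expands $r_{\ell,t}=\sum_i\eta_{ti}x_i$, works with $\operatorname{Cov}(x_i,x_j)$ for the without-replacement indicators, and then replaces the common-neighbour count $k_{th}=\sum_i\eta_{ti}\eta_{hi}$ of the two randomly positioned vertices (and the cross term $\sum_{i<j}\eta_{hi}\eta_{tj}$) by expectations over a random vertex pair; you instead propose the exact edge-pair computation $\operatorname{Var}(r_\ell)=4\sum_{e,e'}\operatorname{Cov}(\mathbf{1}\{e\subseteq S\},\mathbf{1}\{e'\subseteq S\})$ with joint inclusion probabilities $\binom{n-j}{\ell-j}/\binom{n}{\ell}$ for $j=2,3,4$.

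The genuine gap is that the ``bookkeeping'' you defer does not close: the exact edge-pair sum does not reduce to $\sigma_\ell^2(1-\rho_\ell)$. Your computation is exact and, by $k$-regularity, depends on the graph only through $n$, $k$, $m=nk/2$ and the cherry count $n\binom{k}{2}$, so it can be tested on small cases. For the $6$-cycle with $\ell=3$ the number of induced edges equals $2$, $1$, $0$ with probabilities $6/20$, $12/20$, $2/20$, giving $\operatorname{Var}(r_\ell)=4\cdot(9/25)=36/25$ (your three-group formula returns exactly this), whereas $\sigma_\ell^2(1-\rho_\ell)=(27/25)(2/3)=18/25$; the true pairwise covariance is $+3/50$ while the proposition asserts $-3/50$. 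The stated closed form is reached in the paper only after the substitution $k_{th}\mapsto E(k_{th})$ and an analogous averaging of the cross term, i.e., it is an approximation to the covariance, not an exact evaluation of it. Carried out faithfully, your exact route lands on a different expression (one involving $(\ell-2)$, $(\ell-3)$, $(n-3)$ through the $j=3,4$ inclusion probabilities); to recover the proposition you would have to import the paper's averaging step explicitly and present the identity as an approximation valid when $\ell/n\to 0$, which is also the regime in which your final asymptotic argument for $\rho_\ell\to 0$ is correct.
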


\begin{proof} 
We have $ \operatorname{Var}(\sum_{h=1}^{\ell} r_{\ell,h})
{=} \sum_{h=1}^{\ell} (\ell-1)\theta (1 - \theta) ( 1 - y_{\ell,2}) 
+ 2\sum_{t < h} \operatorname{Cov}( r_{\ell,t}, r_{\ell,h})
= \sigma_{\ell}^2 + 2\sum_{t < h} \operatorname{Cov}( r_{\ell,t}, r_{\ell,h}),$
with $y_{\ell,2}= (\ell - 2)/(n-2)$, and $\sigma_{\ell}^2 = 2\ell_2\theta (1 - \theta) ( 1 - y_{\ell,2}).$
Let $k_{th}$ be the number of neighbors in common between vertices $t$ and $h$.
In the notation that follows, the conditional expectation given $\eta$, means 
that the values of the couples $(t,h)$ are fixed. 
Further, 
\begin{multline*}
\operatorname{Cov}( r_{\ell,t}, r_{\ell,h} | \eta ) = \displaystyle\sum_{i=1}^n
\displaystyle\sum_{j=1}^n\eta_{hi}\eta_{tj} \operatorname{Cov}( x_{i}, x_{j}) \\
= 
  \tfrac{1}{(n-1)^2}\biggl\{\displaystyle\sum_{i=1}^n \eta_{hi}\eta_{ti} {(\ell-1)(n- \ell)}
  - 2 \displaystyle\sum_{i < j} \eta_{hi}\eta_{tj} \tfrac{(\ell-1)(n-    \ell)}{n-2}\biggr\}\\
 = 
 \tfrac{(\ell-1)(n- \ell)}{(n-1)^2}   \biggl( \displaystyle\sum_{i=1}^n \eta_{hi}\eta_{ti} -
  {2}\displaystyle\sum_{i < j} \tfrac{\eta_{hi}\eta_{tj}}{n-2} \biggr)
\end{multline*}
Note that  $E(k_{th}) {=} n k(k{-}1)/\{(n{-}1)(n{-}2)\} {=} n\theta (k{-}1)/(n{-}2).$
Therefore
\begin{equation}
 \operatorname{Cov}( r_{\ell,t}, r_{\ell,h} ) 
=-\tfrac{(\ell-1)(n- \ell)(n-2k)}{(n-1)(n-2)^2} \theta.
\label{eq:cov}
\end{equation}
These covariances tend to zero uniformly on $\ell {=} o(n)$ as  $n{\rightarrow}+\infty$.
For the variance, we have
\begin{align*}
  \operatorname{Var}(r_{\ell}  )
  = \sigma_{\ell}^2 \left\{ 1 - \tfrac{(\ell -1)(n-2k)}{(n-2)(n - k -1)} \right\},
\end{align*}
from where the proposition follows.

\end{proof}

%
%
Let $s_\bullet = \max( 0, k -n + \ell) \, \ell/2$, 
$s^\bullet = \min(\ell-1, k) \, \ell/2$ and $\sigma_{\rho,\ell} = \sigma_l\sqrt{1-\rho_\ell}$.
Set $w(\ell, s) = 2\ell_2 \{ (s + \tfrac{1}{2})/\ell_2 - \nu_{\ell}\} / \sigma_{\rho,\ell} $,
where  $\nu_{\ell} = \theta + \beta\sigma_{\rho,\ell}^2$.
Let $\Phi(\cdot)$ denote the standard normal cumulative distribution function (CDF), and set
$\Delta_\Phi(\ell) = \Phi( w(\ell,  s^\bullet + \tfrac{1}{2})) -    \Phi( w(\ell,  s_\bullet - \tfrac{1}{2})).$
The above observations lead us to the following fact that backgrounds the main result of this section.
\begin{fact}
The MGF $ {\cal M}_{p_d(\cdot|\ell)}(2\beta)$ is well-approximated by 
$\exp\{ 2\beta  \ell_2\theta  + \tfrac{\beta^2}{2} \sigma^2_{\rho, \ell} \} \Delta_\Phi(\ell)$.
Therefore the main bulk of the sum in \eqref{eq:partition:1}
is well approximated by 
\[ \displaystyle\sum_{\ell \gg 1}^{o(n)} \binom{n}{\ell} \exp(\alpha' \ell) \exp\{ 2 \beta  \ell_2\theta  
+  \beta^2 \ell_2 \theta (1 - y_{\ell,2}) (1 - \theta)(1  - \rho_{\ell}) \} \Delta_\Phi(\ell).
\]
\end{fact}
In fact, it is well known that the hypergeometric distribution is well approximated by a normal distribution provided that
  its variance goes to infinity (see p. 158, Section 4.4 of \citep{govindarajulu-1965}). In our case, this means that the result is valid
  when $\ell = o(n)$.
 Also Hoeffding's inequality (see Section 6 of \citet{Hoeffding-1963}) for
  hypergeometric variables, states that each $r_{\ell,h}$ is concentrated about its mean $\mu_{\ell,1}= (\ell-1)\theta$, when $n$ is large and $\ell$ is moderate 
  to large. So we just need to study the distribution about its mean.
%
  %
%
These results
imply  that the variables $r_{\ell,h}$ are well-approximated by, and behave like, normally distributed random variables with mean $\mu_{\ell,1}$, and variance $\sigma_{\ell,1}^2 \approx \sigma_{\ell}^2/\ell $,
for values of 
$\ell = o(n)$.
Moreover, from Proposition~\ref{prop:var}, these Gaussian variables are
nearly independent. Therefore their sum is also well-approximated by a normal random variable.
That is, 
for large $\ell = o(n)$, 
we have
\begin{equation}
  \begin{split}
{\cal   M}_{p_d(\cdot|\ell)}(2\beta) \approx\int_{2s_\bullet-1}^{2s^\bullet+1}
\tfrac{1}{\sigma_{\rho,\ell}} \phi\biggl( \tfrac{x - \mu_{\ell}}{\sigma_{\rho,\ell}} \biggr)\exp( \beta x )
 dx,
\end{split}
\label{eq:eq4}
\end{equation}
where $\phi(\cdot)$ stands for the standard normal density.
A straightforward calculation simplifies \eqref{eq:eq4} to
\begin{equation*}
\exp\left(2\beta \theta \ell_2 + \tfrac{\beta^2}2 \sigma_{\rho,\ell}^2\right) \Delta_{\Phi}(\ell),
\end{equation*}
which is the desired result.
\vspace{10pt}

We propose an estimator of the partition function using the above results. 
To avoid the approximation at the extreme values of $\ell$,
set $A_{\phi}(\alpha, \beta) =  1 + \exp( \alpha' n + \beta k n)
+n \exp(\alpha')$. Also, let $$g(\ell) = \alpha' \ell + 2\beta \theta
\ell_2 \{ 1 + (\beta/2)\left(1 {-} \tfrac{\ell-2}{n-2} 
\right)(1{-}\theta)(1 {-}\rho_{\ell}) \}$$ and
define $\Sigma(\alpha, \beta) {=} \displaystyle\sum_{\ell= 2}^{n-1} \binom{n}{\ell}
  \exp\{ g(\ell) \} \Delta_\Phi(\ell).$
  Our partition function estimate
$Z_{\phi}(\alpha, \beta)$,  which we refer to as the
{\em normal-edge partition function estimate} 
is 
\begin{equation}
\begin{split}
Z_{\phi}(\alpha, \beta)=& A_{\phi}(\alpha, \beta) + \Sigma(\alpha, \beta).
\end{split}
  \label{eq:Z:phi}
\end{equation}
The estimate  
$Z_{\phi}(\alpha, \beta)$  may be computed in ${\cal O}(n)$
operations. Although, this calculation is fast, we investigate a
further approximation by replacing the summation in the last term of 
\eqref{eq:Z:phi} by an integral. Specifically, we regard the summation
as a Riemann sum, and hence, as an approximation to the corresponding
integral. The integral can be calculated as a new 
summation with number of terms much smaller than $n$, yielding a final
approximation that can be computed much faster than ${\cal O}(n)$. 
In fact, the Euler-MacLaurin formula
\citep{abramowitzandstegun64} gives us  the  approximation 
\begin{equation*}
 \Sigma(\alpha, \beta)\approx  \int_{2}^{n-1} \tfrac{\Gamma(n+1)}{\Gamma( x) \Gamma(n-x)} \exp\{ g(x)\}
  \Delta_\Phi(x)\, dx + 
  B_{\phi}(\alpha,\beta),
  \end{equation*}
  where
  $2B_{\phi}(\alpha, \beta) =  n \exp\{g(n-1)\} \Delta_\Phi(n-1)  + {n\choose2} \exp\{ g(2) \} \Delta_\Phi(2).$
Using \citet{stirling1730}'s approximation for the Gamma function, and
writing $y= x/n =  \ell/n$,
in
the above integral, we have
$  \Sigma(\alpha, \beta)
  \approx 
  B_{\phi}(\alpha, \beta) + J_{n,k,\frac12}(\alpha',\, \beta),
$
  where, for every $t\in \mathbb{R}$,  we have the function
  \begin{equation}
    J_{n,k,t}(\alpha,\, \beta)  =  \sqrt{\tfrac{n}{2\pi}}
    \int_{2/n}^{1 - 1/n}
    \tfrac{\Delta_{\Phi}(ny) \exp\{ g(ny)\} }{
    (1-y)^{n(1-y) + \frac{1}{2}} y^{ny+ t }}
    dy.\label{eq:J}
  \end{equation}
We  denote this integral approximation to $Z_{\phi}(\alpha, \beta)$
by $\tilde{Z}_{\phi}(\alpha, \beta)$.

This integral approximation error is of order  $\max_y |e(y)|/n$ where
$e(y)$ is the derivative of the integrand in \eqref{eq:J} and is equal
to $\exp\{h(y) + g(y)\} ( \Delta_\Phi(ny) (h'(y) + g'(y)) +
\Delta_\Phi'(ny))$ with $h(y) =  -[n(1 -y)+\tfrac{1}{2}]\log( 1-y) -(ny +\tfrac{1}{2})\log(y)$.
This means that the relative error of the integral approximation is of
order $n^{-1}$ because the terms in the summation are at most of the
same order as $\max_y|e(y)|$.
\subsubsection{Moment approximations}
Having found approximations for $Z(\alpha,\beta)$, we now approximate
the first two moments of \eqref{eq:Ising} under isotropy and homogeneity.
\paragraph{Mean number of active nodes}
\label{approx:M}
We first approximate $M {=} \E(\hat M) {\equiv} \E( \sum_{i=1}^n x_i)$, the expected number of
active nodes. 
  In addition to the setup in Section~\ref{approx:Z}, let
$\Delta_\phi(\ell) = \phi\{ w(\ell,  s^\bullet {+} \tfrac{1}{2}) \} {-} \phi\{  w(\ell,  s_\bullet {-} \tfrac{1}{2}) \}$.
Define
$  C_{\phi.M}(\alpha, \beta)= 
  {n\choose2}  [ \exp\{ g(n{-}1)\}  \Delta_\Phi(n{-}1) {+} \exp\{ g(2)\} \Delta_\Phi(2) ].$
Using the same reasoning as before yields the following estimate for $M$:
\begin{equation}
  \begin{split}
M_\phi \doteq \tfrac{1}{{Z}_\phi} 
\biggl\{
n \exp( \alpha' n + \beta k n)
+n \exp(\alpha')   +  \displaystyle\sum_{\ell=2}^{n-1} \binom{n}{\ell} \ell \exp\{ g(\ell)\}
   \Delta_\Phi(\ell) \biggr\}.
   \label{eq:Mphi}
\end{split}
\end{equation}
Using \citet{stirling1730}'s approximation, the Euler-MacLaurin
expansion~\citep{abramowitzandstegun64}, and writing $\ell$ as $n
(\ell/n)$, provides an approximation for the series in \eqref{eq:Mphi} as
$C_{\phi.M}(\alpha, \beta) {+} n J_{n,k,-\frac12}(\alpha,\, \beta)$, and thence
\begin{equation*}
  \begin{split}
  \tilde M_\phi =  \tfrac{n}{\tilde Z_\phi}  \bigl\{
  \exp{ (\alpha'n  + \beta k n )} + \exp(\alpha') + J_{n,k,
    -\frac12}(\alpha,\beta)   +  C_{\phi.M}(\alpha, \beta) / n \bigr\}.
\end{split}
\end{equation*}

\paragraph{Mean spin interaction}
\label{approx:S}
We now approximate the  expected number
of matches of the homogeneous isotropic Ising model, or its mean spin interaction $S= \E(\hat S)=\E(\tfrac{1}{2}\sum_{i,j} \eta_{ij} x_i x_j
)$. Writing $Z{=}Z(\alpha, \beta)$, and proceeding as before,
\begin{equation*}
  \begin{split}
    S 
    & =
    \tfrac{1}{Z} 
    \displaystyle\sum_{\ell=0}^n \exp({\alpha' \ell}) 
    \displaystyle\sum_{\bx \in M(\ell)} \!\!\bigl(\tfrac{1}{2} \displaystyle\sum_{i, j} \eta_{ij} x_i x_j \bigr) \exp( \beta \displaystyle\sum_{i, j} \eta_{ij} x_i x_j )\\
  &  = \tfrac{m}{Z}  \exp({\alpha' n + \beta k n }) +
    \tfrac{1}{Z} \displaystyle\sum_{\ell=2}^{n-1} \binom{n}{\ell} \exp({\alpha' \ell})
    \displaystyle\sum_{s=s_\bullet}^{s^\bullet} p(s|\ell) \, s \exp(2\beta s).
  \end{split}
\end{equation*}
From weak convergence arguments and similar reductions as in the
approximation for the partition function
$Z(\alpha,\beta)$, 
we get
\begin{equation*}
\begin{split}
\E\{\hat S\exp{(2 \beta \hat S)}\}&\approx \int_{2s_\bullet-1}^{2s^\bullet+1}
\tfrac{1}{\sigma_{\rho,\ell}} \phi\biggl( \tfrac{x - \mu_{\ell}}{\sigma_{\rho,\ell}} \biggr)
\tfrac{x}{2} \exp( \beta x ) \, dx\\
&= \tfrac{1}{2} \tfrac{d}{d\beta}
\int_{2s_\bullet-1}^{2s^\bullet+1}
\tfrac{1}{\sigma_{\rho,\ell}} \phi\biggl( \tfrac{x - \mu_{\ell}}{\sigma_{\rho,\ell}} \biggr)
\exp( \beta x ) \, dx\\
  &=  \tfrac{1}{2}\exp\biggl(2\beta \theta \ell_2 + \tfrac{\beta^2 \sigma_{\rho,\ell}^2}2\biggr) \left\{ (2\theta \ell_2 + \beta\sigma_{\rho,\ell}^2 
    )\Delta_{\Phi}(\ell) - \sigma_{\rho,\ell} \Delta_{\phi}(\ell) \right\}.
\end{split}
\end{equation*}
 So we approximate $\E(\hat S)$ as
 \begin{multline}
\label{eq:s:phi}
S_\phi\doteq \tfrac{m}{Z_\phi}   \exp(\alpha'n + \beta k n)  
 + \tfrac{1}{2Z_\phi} \displaystyle\sum_{\ell=2}^{n-1} \binom{n}{\ell} \biggl[
  \{ 2\theta \ell_2 + 
  \beta\sigma_{\rho,\ell}^2  \} \Delta_{\Phi}(\ell) 
-  \sigma_{\rho,\ell}  \Delta_{\phi}(\ell) \biggr]\\
\qquad\times \exp\biggl(\alpha'\ell + 2\beta \theta \ell_2 + \tfrac{\beta^2 \sigma_{\rho,\ell}^2 }2\biggr).
\end{multline}
Let $\zeta_1(\ell) {=}  1 {+} \tfrac{\beta\sigma_{\rho,\ell}^2}{2\ell_2\theta}$,
and $\zeta_2(\ell) {=} \sigma_{\rho,\ell}   \Delta_{\phi}(\ell)$.
Define
$ D_{\phi.M}(\alpha, \beta) \doteq \tfrac{1}{2}\{ \zeta_1(2) \Delta_{\Phi}(2) - \tfrac{n{-}1}{2k}\zeta_2( 2)\}
  \exp\{ g(2)\}  +
  \{ \tfrac{n-2}2 \zeta_1(n-1) \Delta_{\Phi}(n{-}1)  {-}\tfrac{ \zeta_2( n{-}1)}{2k}\}
  \exp\{ g(n{-}1)\}.
  $
  The expression in (\ref{eq:s:phi}) can be further approximated as an integral in   the same manner as before to obtain the approximation
\begin{multline*}
 \tilde S_\phi \doteq
\tfrac{m}{\tilde Z_\phi} \biggl[
\exp{(\alpha' n + \beta kn )} + D_{\phi.M}(\alpha, \beta) \\
 \qquad+\sqrt{\tfrac{n}{2\pi}} \int_{\tfrac{2}{n}}^{1 - \tfrac{1}{n}} 
\biggl\{ \zeta_1(ny)\Delta_{\Phi}(ny) -
 \tfrac{\zeta_2(ny )}{kny^2}\biggr\} 
\tfrac{\exp[\alpha' ny + \tfrac{\beta k n y^2}2 \zeta_1(ny)  ] }
     {(1-y)^{n(1-y) + \frac{1}{2}} y^{ny - \frac{3}{2}} } \, dy \biggr] ,
\end{multline*}
upon replacing $kn/2$ by $m$, using \citet{stirling1730}'s approximation 
to the binomial coefficients, and the Euler-MacLaurin approximation~\citep{abramowitzandstegun64} for the sum.

\subsection{Extension to the anisotropic case}
\label{sec:anisotropic}
We now explore the case of the homogeneous anisotropic Ising model. 
We consider two sets of edges $E_p, E_v$ induced by the neighborhood relations
$E_p\! =\! \{ (i,j) : i\! \simp\! j\}$, $E_v\! =\! \{ (i,j) : i\!
\simv\! j\}$, with 
$E_p \cap E_v = \emptyset$. For instance, we may think of $E_p$ as the
set of edges in the same ``plane'' (or a 2D lattice)
and of $E_v$ as the set of edges formed by two ``planes'' (of the
third dimension in a 3D-lattice). Then the Ising model PMF is 
\begin{multline}
\mP(\bx;\alpha,\beta_p, \beta_v) = Z(\alpha,\beta_p, \beta_v)^{-1} \exp\left\{
\alpha'' \displaystyle\sum_{i=1}^n x_i 
+ \beta_p \displaystyle\sum_{i j} \eta_{p,ij} x_ix_j + \beta_v  \displaystyle\sum_{i j} \eta_{v,ij}x_ix_j\right \},
\label{eq:pmf.aniso}
\end{multline}
with $\alpha'' = \alpha -\beta_p k_p -\beta_v k_v,
\eta_{p,ij} = \I(i \simp j), \eta_{v,ij}=\I(i \simv j)$, 
and where $k_p$ and $k_v$ are the degrees of the regular graphs $(V,
E_p)$ and $(V, E_v)$. Proceeding  similarly as before, 
$  Z(\alpha,\beta_p, \beta_v)
 =  1 + \exp{(\alpha n)} + n\exp(\alpha'')
+\displaystyle\sum_{\ell=2}^{n-1} \! \! \exp(\alpha'' \ell) \!\!\!
\displaystyle\sum_{X \in M(\ell)} \!\!\! \exp\bigl(\beta_p \displaystyle\sum_{i, j} \eta_{p, ij} x_i x_j + \beta_v \displaystyle\sum_{i, j} \eta_{v, ij} x_i x_j\bigr)$. 
From identical arguments as in Proposition~\ref{prop:Z}, 
we have
\begin{equation*}
Z(\alpha,\beta_p, \beta_v)
= A_2(\alpha,\beta_p,\beta_v) 
+ \displaystyle\sum_{\ell=2}^{n-2} {n\choose\ell} \exp{(\alpha'' \ell)} {\cal M}_{p(\cdot,\cdot | \ell)}( 2\beta_p, 2\beta_v),
\end{equation*}
	where $A_2(\alpha,\beta_p,\beta_v) =1  + \exp{(\alpha n)} + n\exp{(\alpha'')}  +
n \exp{\{ \alpha (n-2) + \alpha'' \}},$
and 
\[
  {\cal M}_{p(\cdot,\cdot | \ell)}( 2\beta_p, 2\beta_v)  = \!
\displaystyle\sum_{\mathclap{\substack{s_p, s_v=0\\  s_p+s_v \leq \ell_2}}}  \exp\{ 2\beta_p s_p + 2\beta_v s_v \}
\tfrac{Y(s_p, s_v, \ell)}{ {n\choose\ell} },
\]
  with $Y(s_p, s_v, \ell)$ equal to the number of node-induced graphs
  $G_\ell$ 
  containing exactly $s_p$ edges in $E_p$ and exactly $s_v$ edges in $E_v$.
  As before, consider $r_{p,\ell, h} = \sum_{j=1}^n \eta_{p, i_h, j} x_j$, and similarly,
  $r_{v,\ell, h} = \sum_{j=1}^n \eta_{v, i_h, j} x_j$, where $\{x_{i_1}, \ldots, x_{i_\ell} \}$ is the set
  of nodes of the graph 
    $G_\ell$.
  Let $r_{p,\ell} {=} \sum_{h=1}^\ell r_{p,\ell, h}$, and $r_{v,\ell} {=} \sum_{h=1}^\ell r_{v,\ell, h}$.
  As in the derivation of $Z_{\phi}$ earlier,
  we argue that the joint PMF $p(r_{p,\ell},r_{v,\ell} \lvert \ell)$ behaves like a
  bivariate normal distribution, and hence ${\cal M}_{p(\cdot,\cdot | \ell)}( 2\beta_p, 2\beta_v)$
  is approximately
  \begin{equation} \label{eq:mgf:2}
    \int_{2s_{p\bullet} - 1}^{2s_p^\bullet + 1} \int_{2s_{v\bullet}-1}^{\min(2s_v^\bullet + 1, 2s_{pv}^\bullet + 1 - x)}
   \hspace{-0.1in} \tfrac{ \exp( \beta_p x + \beta_v y) }
         {2\pi \sqrt{\operatorname{det}(V_\ell)}} \exp\{ {-} (x {-}\mu_{p,\ell}, y {-}\mu_{v, \ell}) V_\ell^{-1} (x {-}\mu_{p,\ell},y{-}\mu_{v,\ell})' \}
         \, dx dy,
    \end{equation}
  where $s_{ a \bullet} = \max(0, k_a -n + \ell) \ell/2$, $s_a^\bullet = \min(\ell-1,k_a) \ell/2$,
  for $a =p,v$,
  $s_{pv}^\bullet = \min( \ell_2, (k_p+k_v)\ell/2 )$,
 $\mu_{p,\ell} = 2\ell_2 \theta_p$, and $\mu_{v, \ell} = 2 \ell_2 \theta_v$, with
  $\theta_p = k_p / (n-1),$ and $\theta_v = k_v / (n-1)$;
and 
  \[
  V_\ell = \begin{pmatrix}
    \sigma_{p,\ell}^2 (1 - \rho_{p,\ell}) &   \sigma_{p,v,\ell} \\
    \sigma_{p,v,\ell}  &  \sigma_{v,\ell}^2 (1 - \rho_{v,\ell})
  \end{pmatrix}.
  \]
  Let
  $\tau_{p,\ell}^2 = \sigma_{p,\ell}^2 (1 -\rho_{p,\ell})$, and
  $\tau_{v,\ell}^2 = \sigma_{v,\ell}^2 (1 -\rho_{v,\ell})$.
  A straightforward calculation shows that \eqref{eq:mgf:2} reduces to
  \[ \exp\biggl\{ \beta_p \mu_{p,\ell} + \beta_v \mu_{v,\ell}
  + \tfrac{1}{2} (\beta_p, \beta_v) V_\ell (\beta_p, \beta_v)' \biggr\}
  \Delta_{\Phi_2}(\ell),
  \]
  where
  $\Phi_2(\cdot)$ is the standard bivariate normal CDF, and
  $\Delta_{\Phi_2}( \ell)$ is $\Phi_2(\cdot)$ evaluated in the region
  $V_\ell^{-1/2}\bigl[ \Omega_1 - \{(\mu_{p,\ell}, \mu_{v,\ell})' +  V_\ell (\beta_p, \beta_v)'\} \bigr]$,
  with $\Omega_1 = [2s_{p\bullet} - 1, 2s_p^\bullet + 1]\times [2s_{v\bullet}-1, 2s_v^\bullet + 1]$.
From here, approximation formulae analogous to the ones in 
\eqref{eq:Z:phi} and \eqref{eq:J} are readily obtained for the anisotropic case.
For example, let $g_2(\ell) = \alpha'' \ell + \beta_p \mu_{p,\ell} + \beta_v \mu_{v,\ell}
+ \tfrac{1}{2} (\beta_p, \beta_v) V_\ell (\beta_p, \beta_v)'$, and
$2 B_{2, \phi}(\alpha, \beta_p, \beta_v) =  {n \choose 2}\exp\{g_2(n-2)\}\Delta_{\Phi_2}(n-2) + {n \choose 2}\exp\{g_2(2)\}\Delta_{\Phi_2}(2)$.
We have the approximation
\begin{equation}
  \begin{split}
  \tilde{Z}_{\phi}(\alpha, \beta_p, \beta_v) =
&  A_2(\alpha,\beta_p,\beta_v) 
  + B_{2, \phi}(\alpha, \beta_p, \beta_v)  \sqrt{\tfrac{n}{2\pi}}    \int_{2/n}^{1 - 2/n}
    \tfrac{\Delta_{\Phi_2}(ny) \exp\{ g_2(ny)\} }{
    (1-y)^{n(1-y) + \frac{1}{2}} y^{ny+ \frac{1}{2} }}
      dy.
      \end{split}
      \label{eq:Z2}
\end{equation}
\begin{remark}
The variance-covariance $V_\ell$ is obtained in a similar manner as the variances and covariances in the single graph case.
We have
\begin{multline*}
 \operatorname{Cov}( r_{p,\ell,t}, r_{v, \ell,h}\lvert \eta )  = 
 \displaystyle\sum_{i=1}^n
\displaystyle\sum_{j=1}^n\eta_{v,hi}\eta_{p,tj} \operatorname{Cov}( x_{i}, x_{j}) \\
= 
  \tfrac{(\ell-1)(n-\ell)}{(n-1)^2} \left\{ \displaystyle\sum_{i=1}^n \eta_{p,ti}\eta_{v,hi}
  - \tfrac{2}{n-2} \displaystyle\sum_{i < j} \eta_{v,hi}\eta_{p,tj} \right\}= 2 k_p k_v  \tfrac{ (\ell-1)(n- \ell) } { (n-1)^2(n-2)^2 }.
\end{multline*}
\end{remark}

 Next we 
proceed as in Section~\ref{sec:isotropic} to obtain
$M_\phi$, $\tilde{M}_\phi$, $S_{\phi}$ and $\tilde{S}_{\phi}$. However, now we need to estimate
both $S_{p, \phi}$ and $S_{v,\phi}$. The estimates 
  $M_{\phi}, S_{p, \phi},  S_{v, \phi}$ correspond to the derivatives of $\log Z_{\phi}(\alpha, \beta_p, \beta_v)$ with respect to $\alpha$, $\beta_p$,
  and $\beta_v$, respectively. The approximations based on integrals instead of summations can be seen
  to correspond to derivatives of $\log Z_{\tilde \phi}(\alpha, \beta_p, \beta_v)$ as well.
  Therefore, estimates of $\tilde M_\phi$, $\tilde{S}_{p, \phi}$ and $\tilde{S}_{v, \phi}$ can be obtained directly from derivatives
  of $Z_{\tilde \phi}(\alpha, \beta_p, \beta_v)$.
    For brevity, we sketch the ideas here, and refer to Section~\ref{supp:approximations} for explicit approximations to
    $\E(M)$, $\E(S_p)$     and $\E(S_v)$.


\section{Performance evaluations}
\label{simulations}
\subsection{Approximation formulae assessments}
\label{approximation.assessments}

We evaluated performance of the analytical approximation  formulae 
derived in Section~\ref{approximations} by comparing them with those
obtained by simulation. The mean activation and spin interaction were estimated 
for a range of $(\alpha, \beta)$-pairs using  MCMC  -- these
estimates were assumed to be the ``gold standard'' for our
comparisons. However,  MCMC simulation-based estimation of 
 $Z(\alpha,\beta)$ is very difficult, so we used path
sampling~\citep{Ogata-1989,Gelman-Meng-1998}  to 
obtain its reference value.
Our approximation formulae apply to any regular graph,
but for convenience, we only evaluated performance on lattice
 graphs. (Because of  edge effects, our lattice
 graphs are only approximately regular.) 
 We note also that the use of MCMC methods here is simply to assess the accuracy of our formulae: while faster stochastic approaches may be resorted to, we consider it largely irrelevant in our comparisons, simply because it is inherently unfair to the stochastic methods to compare the performance of our approximation formulae that are completely scalable and near-immediately calculated, to any  stochastic method, that generated the Ising model and is affected by its size. Therefore, the real evaluation of our method is in its accuracy. Also, because our approximation formulae rely on asymptotic reductions, it is of special interest to understand their accuracy in less favorable contexts where asymptotic arguments may be more suspect. Therefore, our evaluations are on moderate, but realistic-sized fields. We also reiterate that formulae rely on asymptotic reductions, and therefore it is of special interest to understand their accuracy in less favorable contexts where asymptotic arguments may be more suspect. 
 We consider the isotropic  and anistropic cases separately.
 \subsubsection{The isotropic case}
 \label{isotropy}
We simulated 
realizations from Ising models on two lattice configurations and with
three different neighborhood orders.  
The two lattices had grids of sizes $116{\times} 152$ and $64{\times} 64$, respectively. The neighborhoods we chose for our simulations were of the first,
second and fifth orders, corresponding to graphs of degree $k{=}4, 8$
and $24$, respectively. For each of the six  combinations of grid
sizes and graph degrees, we compared performance for {1,102} different
pairs of values of the 
 Ising parameters $(\alpha, \beta) {\in} [0,\, 5] {\times} [0.005, \, 10]$
(19 values for $\alpha$, and 58 values for $\beta$). 
Note that there is no need to evaluate the approximations for negative
$\alpha$ because
$Z(-\alpha, \beta ) {=} \exp{(-\alpha n)} Z(\alpha, \beta)$ for all
pairs $(\alpha, \beta)$. (In particular, moments such as 
$\E_{-\alpha, \beta}(\hat{M})$
can be easily obtained from $\E_{\alpha, \beta}(\hat{M})$.)
For each
setting, we estimated the Ising moments and normalizing constant from
samples obtained using the Swendsen-Wang~\citet{Swendsen-Wang-1987} algorithm
with a burn-in period of 10,000 iterations and a sample size of 
10,000 realizations from the post-burn-in iterations and used these
estimates as the ``gold standard'' reference values. 
For each moment estimate, we evaluated {the} performance of our
approximations relative to the  MCMC estimate by computing both 
the absolute value difference between the
MCMC estimate ${m_{MC}}$ and the analytical approximation given by the
Normal edge proportion approximation ${m_{N}}$, 
and the relative absolute difference between these quantities
$|m_{MC} {-} m_{N}| / m_{MC}.$ 
{ The measures of absolute ($L_1$) and
relative ($R_1$) discrepancy between all evaluations in the grid for
$(\alpha, \beta)$ 
are given by the difference between the approximated and estimated
surfaces
$$ L_1( m_{MC}, m_{N}) {=} V^{-1} \int\int |m_{MC}(\alpha,\beta) - m_{N}(\alpha,\beta)|
\, d\alpha\, d\beta,$$
and 
$$R_1( m_{MC}, m_{N}) {=} V^{-1}\int\int \frac{|m_{MC}(\alpha,\beta) - m_{N}(\alpha,\beta)|}{|m_{MC}(\alpha,\beta)|} \, d\alpha\, d\beta,$$
with $V{=} \int\int  d\alpha\, d\beta.$
We also show the ratio of the absolute discrepancy to the mean volume of the
region below the surface, given by $m_{MC}$,
$L_1(  m_{MC}, m_{N})/ V_{MC}$,
where $V_{MC}{=} (V^{-1} \int\int  m_{MC} d\alpha\, d\beta)$.}
Table~\ref{table:logZ} shows the relative { and absolute discrepancies}
between the 
analytical approximations of $\log Z(\alpha, \beta)$ and the path sampling estimates using the MCMC samples.
The path sampling estimates were obtained using the estimate of the expected
matches, that is 
$\log Z_{MC}(\alpha, \beta) {=}
\int_{0}^{\beta} \operatorname{Mean}_{\alpha, b}(\sum_{i\sim j} \delta_{ij}^{(t)}) 
\, db 
+ n\log\{ 1 + \exp{(\alpha)}\} {-} m \beta,$
where $\delta_{ij}^{(t)}$ is the observed value of $\delta_{ij}$ in the
$t$th sample generated by the Swendsen-Wang algorithm and, as before,
$m$ is the number of edges in the graph. 
\begin{table}
\centering
  \caption{Discrepancies associated with the approximation of the logarithm of
  the partition function\label{table:logZ}. Here, we have two lattice
  grids ($\G$), A of size $116\!\times\! 152$ and $B$ of size
  $64\times64$ respectively, 
  with regular graph of degree $k$.
 All discrepancies are computed against the MCMC
 path sampling estimates which forms our ``gold standard'' for comparisons in these experiments,
 except for the case $k{=}2$, whose discrepancies were computed using the known asymptotic formula for the 2-NN graph. 
}
\begin{tabular}{rrrrllll} \\ \hline
  \multicolumn{2}{c}{\ } & \multicolumn{6}{c}{Absolute and Relative Discrepancies} \\
  \multicolumn{2}{c}{\ }
& \multicolumn{2}{c}{$L_1$} &  \multicolumn{2}{c}{$L_1 / V_{MC} $} &
  \multicolumn{2}{c}{$R_1$} \\ \cline{3-8}
  \multicolumn{8}{c}{\  }\\
\multicolumn{1}{c}{$\G$} & \multicolumn{1}{c}{$k$} &
\multicolumn{1}{r}{$\tilde Z_{\phi}$} & \multicolumn{1}{r}{$Z_{\phi}$} &
\multicolumn{1}{r}{$\tilde Z_{\phi}$} & \multicolumn{1}{r}{$Z_{\phi}$}  &
\multicolumn{1}{r}{$\tilde Z_{\phi}$} & \multicolumn{1}{r}{$Z_{\phi}$} \\\hline
A & $2$  &  $26.72$  & $26.73$  & $0.0006$& $0.0006$& $0.009$ & $0.009$ \\
A &  $4$ &  $270.48$ & $270.55$ & $0.006$ & $0.006$ & $0.032$ & $0.032$ \\
A &  $8$ &  $437.97$ & $438.00$ & $0.010$ & $0.010$ & $0.047$ & $0.047$ \\
A & $24$ &  $373.57$ & $373.59$ & $0.008$ & $0.008$ & $0.044$ & $0.044$ \\
B   & $2$  &  $6.20$   & $6.20$   & $0.0006$& $0.0006$& $0.009$ & $0.009$ \\
B   &  $4$ &  $62.86$  & $62.93$  & $0.006$ & $0.006$ & $0.032$ & $0.032$ \\
B   &  $8$ & $100.81$  &$100.81$  & $0.010$ & $0.010$ & $0.047$ & $0.047$ \\
B   & $24$ &  $88.24$  & $88.24$  & $0.009$ & $0.009$ & $0.044$ & $0.044$ \\\hline
\end{tabular}
\end{table}
Table~\ref{table:logZ} indicates that both the direct Normal edge
proportion estimate and its counterpart 
that uses the Euler-MacLaurin formula perform similarly. Thus there is almost
no loss in accuracy when using the faster Euler-MacLaurin based estimate.
In order to further show the value of this simplified approximation, we  
also compared  this 
analytical approximation
with the theoretical asymptotic result for the logarithm of the normalizing constant for the 2-nearest-neighbor (2-NN) graph, that is, for $k{=}2$.
Section~\ref{sec:appendix:1NN} shows that $\log Z(\alpha, \beta)$
approximately equals 
\begin{equation}
  \label{1nn}
   \tfrac {n (\alpha - \beta ) }2
+ n\log \biggl[ \exp(\beta/2 ) \cosh(\alpha/2) 
+ \sqrt{ \exp(\beta ) \cosh^2(\alpha/2) - 2\sinh(\beta) } \biggr],
\end{equation}
after adapting the result of 
~\citet[Chapter 13, page
261]{Salinas-2001}
to the case of the \{0,1\}-statespace 2-NN Ising model.
Performance evaluations for this case are also in Table~\ref{table:logZ},
with the results again indicating that the numerical approximation
works very well even for 
the smallest possible value of $k$  even though these
approximations are based on moderate to large values of $k$.
\begin{table}[h]
\begin{center}
\caption{Absolute and relative discrepancies between MCMC estimates
of (a) the mean number of active nodes or (b) the mean spin interaction, and
their numerical approximations for
lattice grids ($\G$) A of size $116\!\times\!152$ and B of size
$64\!\times \!64$, and degree $k$. 
}
  \label{table:sim:comparison:L1}
\subfloat[Mean number of active nodes]{
\begin{tabular}{rrrrllll} \\ \hline
\multicolumn{2}{c}{\  } & \multicolumn{6}{c}{Absolute and Relative Discrepancies} \\ 
\multicolumn{2}{c}{\  } & \multicolumn{2}{c}{$L_1$} &  \multicolumn{2}{c}{$L_1 / V_{MC} $} & \multicolumn{2}{c}{$R_1$} \\ \cline{3-8}
\multicolumn{8}{c}{\  }\\
\multicolumn{1}{c}{$\G$} & \multicolumn{1}{c}{$k$} &
\multicolumn{1}{c}{$\tilde M_{\phi}$} & \multicolumn{1}{c}{$M_{\phi}$} &
\multicolumn{1}{c}{$\tilde M_{\phi}$} & \multicolumn{1}{c}{$M_{\phi}$}  &
\multicolumn{1}{c}{$\tilde M_{\phi}$} & \multicolumn{1}{c}{$M_{\phi}$} \\\hline
A  &  $4$ & $18.94$ & $19.71$ & $0.001 $  & $0.001 $  & $0.002$  & $0.002$ \\
A  &  $8$ & $8.42$  & $ 8.81$ & $0.0005 $ & $0.0005$  & $0.0009$ & $0.0009$  \\
A  & $24$ & $2.84$  & $ 2.94$ & $0.0002 $ & $0.0002$  & $0.0003$ & $0.0003$ \\
B  &  $4$ & $ 4.14$ & $ 4.32$ & $0.001 $  & $0.001 $  & $0.002$  & $0.002$ \\
B  &  $8$ & $ 0.73$ & $ 0.74$ & $0.0002 $ & $0.0002$  & $0.0003$ & $0.0003$ \\
B  & $24$ & $ 0.25$ & $ 0.25$ & $0.00006$ & $0.00006$ & $0.0001$ &
                                                                   $0.0001$ \\ \hline
\end{tabular}
}

  \subfloat[Mean spin interaction]{  
    \begin{tabular}{rrrrllll} \\
      \hline
\multicolumn{2}{c}{\  } & \multicolumn{6}{c}{Absolute and Relative Discrepancies} \\ 
\multicolumn{2}{c}{\  } & \multicolumn{2}{c}{$L_1$} &  \multicolumn{2}{c}{$L_1 / V_{MC} $} & \multicolumn{2}{c}{$R_1$} \\ \cline{3-8}
\multicolumn{8}{c}{\  }\\
\multicolumn{1}{c}{$\G$} & \multicolumn{1}{c}{$k$} &
\multicolumn{1}{c}{$\tilde S_{\phi}$} & \multicolumn{1}{c}{$S_{\phi}$} &
\multicolumn{1}{c}{$\tilde S_{\phi}$} & \multicolumn{1}{c}{$S_{\phi}$}  &
\multicolumn{1}{c}{$\tilde S_{\phi}$} & \multicolumn{1}{c}{$S_{\phi}$} \\\hline
  A  &  $4$ & $39.18$ & $41.47$ & $0.001$   &  $0.001$  & $0.002$  & $0.002 $ \\
  A  &  $8$ & $33.57$ & $35.90$ & $0.0005$  & $0.0005$  & $0.0009$ & $0.001 $   \\
  A  & $24$ & $33.80$ & $34.65$ & $0.0002$  &  $0.0002$ & $0.0003$ & $0.0003$  \\
  B  &  $4$ & $9.27$  & $9.35$  & $0.001$   &  $0.001$  & $0.002$  & $0.002$ \\
  B  &  $8$ & $3.94$  & $3.00$  & $0.0003$  &  $0.0002$ & $0.0004$ & $0.0003$\\
  B  & $24$ & $3.85$  & $2.66$  & $0.00008$ &  $0.00006$& $0.0001$ & $0.0001$ \\
\hline\end{tabular}
}
\end{center}
\end{table}

 Table~\ref{table:sim:comparison:L1} reports the values of the
 absolute value and relative discrepancies for the relevant moments
 (mean activation and spin interaction) of the Ising model.
 The results indicate good performance of our
   approximation formulae relative to the MCMC estimates.
   It is worth noting that the MCMC algorithm
   took about four days to compute the 1,102 sets of moments for each
   combination of grid-size and graph degree combination, while our
   approximation formula took well under a second for each calculation. (We stress that our approximations result in formulae, so that our computational cost is unaffected by larger-sized MRFs).  
\subsubsection{The anisotropic model case}   
\label{anisotropy}
We also evaluated performance of the approximation formulae derived in
Section~\ref{sec:anisotropic}. 
\begin{table}
  \begin{center}
  \caption{Mean relative discrepancies between MCMC estimates and numerical approximations for the anisotropic case
    (standard deviations are shown within parentheses).
    The simulations are made on a lattice of dimension $66{\times} 106$
    to match the setting of the \texttt{pistachios} datasets.
}   \label{table:sim:aniso}
  \begin{tabular}{crrrrr} \\\hline
    \multicolumn{6}{c}{ Relative Discrepancies for the Anisotropic Graph} \\ 
    Graph & \multicolumn{2}{c}{degrees} & \multicolumn{3}{c}{Moments} \\ 
    Order & $k_p$ & $k_v$ & $ \tilde M_\phi$ & $\tilde S_p$ & $\tilde S_v$ \\ \hline
        \multicolumn{6}{c}{\  }\\
    1           & 2     & 2     &  0.060\,(0.043)  & 0.108\,(0.071) & 0.107\,(0.070) \\
    3           & 4     & 8     & 0.013\,(0.023)   & 0.024\,(0.042) & 0.025\,(0.043) \\
    5           & 4     & 20    & 0.006\,(0.018)   & 0.011\,(0.033) & 0.011\,(0.035) \\ \hline
  \end{tabular}
  \end{center}
\end{table}
Our experiments were on a $66{\times} 106$ lattice inspired by the application in Section~\ref{sec:pistachios}.
Table~\ref{table:sim:aniso} summarizes the results for the anisotropic
model applied on such lattices.
The simulations shows the mean and standard deviations of the relative discrepancies evaluated on a
 thousand parameters
$(\alpha, \beta_p, \beta_v)$ in the hypercube $[0,1]^3$ chosen by a
Latin Hypercube Sampling (LHS) scheme \citep{mckay2000comparison}  which is
a quasi-random sampling method. This statistical method
covers the space of possible points more effectively than a uniform
sample. 
 The moments estimates were computed using approximate derivatives for $\log \tilde{Z}_\phi(\alpha, \beta_p, \beta_v)$.
 The derivatives were approximated using Chebyshev polynomials
 \citep{numercialrecipesc}.
 (We note that these results are presented  in a different
format in Table~\ref{table:sim:aniso} than in
Table~\ref{table:sim:comparison:L1}. This is because for the
isotropic model, we 
use a 2D uniform grid so that the integrals can be estimated in two
dimensions. For the anisotropic case however, the grid would have been
too big to perform the above operation, so we used the LHS
scheme. Consequently, this is not an uniform sample, and so  the mean is not necessarily
an estimate of the integral. For this reason, we show means and
standard deviations evaluated over the 1102  points of the LHS, in
Table~\ref{table:sim:aniso}.) In any case, clearly, the approximation error becomes smaller when the order of
 the graph increases. As in the isotropic case, our results closely
 match our theoretical developments in Section~\ref{approximations}.
\subsection{Parameter estimation performance}
\label{sec:estimation}

In this section we study the estimation problem for both the isotropic and the
anisotropic model. For this latter model, we consider three parameters $(\alpha, \beta_p, \beta_v)$
as in Section~\ref{sec:anisotropic}. For both models we consider up to order five neighborhoods
in the lattices (or graphs, in general).
For the second and third order anisotropic models, $\beta_p$ is the Ising parameter
associated with the first and second order neighborhoods, respectively. The parameter
$\beta_v$ is associated with the furthest neighbors. For the first order model,
each one of $\beta_p$ and $\beta_v$ is associated with one of the two axes of the lattice.
For the fourth and fifth order models, $\beta_p$ is associated with the second order neighborhoods,
and $\beta_v$ with the remaining (furthest) neighbors. This choice is a compromise to have similar
number of neighbors associated with each parameter.

For the isotropic  model we selected a hundred points
$\theta {=} (\alpha, \beta_p) {\in} (0,1)^2$, while for the anisotropic model, we selected a hundred points 
$\theta {=} (\alpha, \beta_p, \beta_v) {\in} (0,1)^3$. For both models, the
points were selected using a Latin Hypercube sampling (LHS) scheme \citep{mckay2000comparison}.
For each set of points, we simulated ten Ising realizations from the corresponding model
using the Swendsen-Wang algorithm with a burn-in of 100,000 samples.
The parameters were then estimated on each Ising sample, yielding ten estimates of the parameters.
The goodness-of-fit of the estimation was measured with the root mean squared error
$ \operatorname{RMSE}_i {=} \sqrt{\sum_{j=1}^{10} ( \hat\theta_{ij} {-} \theta_i)^2 / 10}$, $i{=}1,2 \ldots, 100$, as well as the
squared of the bias $\operatorname{Bias}_i^2 {=} ( \bar\theta_i {-} \theta_i)^2$, where $\bar\theta_i$ is the average value
of the ten estimates $\{\hat\theta_{ij}\}_{j=1}^{10}$.
The hundred set of points yielded overall goodness-of-fits $\operatorname{RMSE}{ =} \sum_{i=1}^{100} \operatorname{RMSE}_i / 100$, and
$\operatorname{Bias}^2 {=} \sum_{i=1}^{100} \operatorname{Bias}_i^2 /100$.

Our estimation method, which we refer to as MNG2, consists of finding the points that minimize the squared-norm of the gradient.
We do this instead of maximizing the log-likelihood because, as seen in the simulations of the previous sections,
our method approximates well the moments of the Ising model. The search for the minimum is done with the derivative-free optimization algorithm of Nelder-Mead \citep{nelder1965simplex}. The initial point of the search is sought by
evaluating a sparse grid of about fifty points.
We compare our method with the estimates given by the maximum
pseudo-likelihood estimation (MPLE)
\citep{Besag-1975,Arnold-Strauss-1991,Gong-Samaniego-1981}. 
\begin{table}
    \caption{Root means squared error (RMSE) and Bias$^2$ results for
      the (a) isotropic and (b) anisotropic model.}
    \label{table:isoaniso-estimates}
  \subfloat[istropic case]{
  \begin{tabular}{crrrr} \\ \hline
    Graph    &    \multicolumn{2}{c}{RMSE}  & \multicolumn{2}{c}{Bias$^2$} \\
  Order    &  MNG2  & MPLE   & MNG2  & MPLE    \\ \hline 
   1   & 0.67         & 0.63  & 0.39        & 0.39   \\     
   2   & 0.60         & 1.56  & 0.35        & 2.43   \\
   3   & 0.67         & 1.69  & 0.42        & 2.85   \\
   4   & 0.86         & 1.90  & 0.72        & 3.62   \\
   5   & 0.74         & 1.95  & 0.54        & 3.81   \\ \hline
  \end{tabular}}
\hfill
\subfloat[anisotropic case]{
  \begin{tabular}{ccccc} \\ \hline
  Graph  &    \multicolumn{2}{c}{RMSE}  & \multicolumn{2}{c}{Bias$^2$}  \\
  Order  &  MNG2   & MPLE   & MNG2  & MPLE \\ \hline 
   1    & 0.80         & 0.75  & 0.56        & 0.55   \\     
   2    & 0.80         & 1.58  & 0.60        & 1.52   \\
   3    & 0.80         & 1.97  & 0.62        & 3.24   \\
   4    & 0.83         & 2.10  & 0.68        & 4.64   \\
   5    & 0.85         & 2.16  & 0.73        & 4.40   \\ \hline
  \end{tabular}
}
\end{table}  
Table~\ref{table:isoaniso-estimates} has the results.
Our method is clearly more stable than MPLE over the distinct graph
models. Further, we see that 
MPLE is good at estimating first-order neighborhood parameters, but
performs poorly for higher order
neighborhood graphs.

\section{Applications}
\label{application}
\subsection{Activation detection in  fMRI experiments}
\label{fmriapp}
\subsubsection{Bayesian model for voxel activation}
We illustrate the use of our approximations in fully Bayesian
inference for determining activation in a functional Magnetic Resonance Imaging (fMRI)~\citep{lazar08,lindquist08} experiment. Our fMRI 
dataset is derived from images from the twelve replicated instances of a
subject alternating between rest and also alternately tapping his
right-hand and left-hand
fingers~\citep{maitraetal02,maitra09,maitra10}. 
For this illustration, we restrict attention only to the right-hand and 
the 20th slice, noting also that our derivations are general enough 
to extend to the other hand and  the three-dimensional volume.
Our data are in the 
form of $p$-values at each pixel that measure the significance of the
positivity of the linear relationship between the pixelwise observed
Blood-Oxygen-Level-Dependent (BOLD) time series response and the
expected BOLD response obtained through a convolution of the input
stimulus time-course with the Hemodynamic Response Function.
Let 
$p_{r1},p_{r2},\ldots,p_{rn}$ be the observed $p$-values in
the $r$th replication, where $n$ is the number of pixels. Let
$X_1,X_2,\ldots,X_n$ be indicator variables, with $X_i{=}0$ or 1
depending on whether the $i$th pixel is truly active or not. Then, we
may model $p_{ri}$, given the true state $X_i{=}x_i$ of the pixel
as $f(p_{ri}{\mid} X_i {=} x_i) {=} \{ U( p_{ri}; 0,1) \}^{1 - x_i} \,
\{b(p_{ri}; a, b)\}^{x_i}$ 
where $U( p_{ri}; 0,1)$ is the standard uniform density, and
$b(p_{ri}; a, b)$ is the density of a Beta  
distribution with parameters $(a,b)$, each evaluated at $p_{ri}$.
To simplify the analysis, we reparametrize  the
$b(p_{ri}; a, b)$ parameters by $\psi{=} a{+}b$  and
$\mu{=}a/(a{+}b)$. We assume a prior distribution on the $X_i$'s in the
form of~\eqref{eq:Ising} with homogeneous $\beta_{ij} \equiv \beta$
and a second-order neighborhood structure~\citep{li09} of $k=8$
neighbors for each interior pixel. We consider a standard uniform
prior density for $\mu$ and a $\operatorname{Gamma}( \zeta, \lambda )$
prior density for the parameter $\psi$. Specifically, the prior
density for $\psi$ is $\operatorname{gamma}(\psi; \zeta, \lambda )
{\propto} \psi^{\zeta -1} \exp( -\lambda \psi)$. We assume uniform
hyperprior densities for $\alpha$ and $\beta$. The posterior density
of $\Theta {=} \{x_1,\ldots,x_n,\alpha,\beta,\psi,\mu\}$ is given by
\begin{multline*}
  \displaystyle\prod_{i=1}^n  \displaystyle\prod_{r=1}^R \bigl\{ b(p_{ri}; \mu\psi, \psi-\mu\psi)\bigr\}^{x_i}  \tfrac{1}{Z(\alpha;\beta)}
\exp\biggl\{  -\lambda \psi + \alpha \sum_{i} x_i 
 - \displaystyle\sum_{i \sim j} \beta (1 - \delta_{ij})\biggr\} \psi^{\zeta
  -1} \mathbf{1}_{[0 < \mu < 1]},
\end{multline*}
where for any set $S$, $\mathbf{1}_{S}$ denotes the indicator function
associated with $S$. Analytical inference being impractical to implement,
we derive a Metropolis-Hastings algorithm to estimate the posterior
densities of interest. Sampling from the above needs values of
$Z(\alpha,\beta)$ for which approaches typically involve, among
other strategies, offline estimation of the normalizing constant
through tedious MCMC methods at some values and then interpolation at
others~(for example, \citep{Maitra-Besag-1998}). Our approximations
  obviate the need for this offline approach and allow for the
  possibility of a direct approach. Section~\ref{sec:appendix:1NN} provides the MCMC
  framework for each parameter   $\eta {\in} \Theta$.
\subsubsection{Results}
The  posterior density  was used in the context of activation
detection in the fMRI dataset. For this example, we set the
hyperprior
parameters $\lambda{=}1$ and $\zeta{=} 10$. This reflects our general
a priori view that $\psi{=} a {+} b$ is large. We initialized our MCMC
simulations with $(\alpha,\beta,\psi,\mu) {=} (0.001, 0.0025, 1,
0.001)$ and  collected a sample 
of 10,000 realizations after a burn-in period of the same number of
iterations. The vertex values (for the Ising model)
were updated (and 
initialized after a burn-in period of 3,000 iterations)
with the  Swendsen-Wang~\citet{Swendsen-Wang-1987} algorithm or the single-site
updating given in point (1) of Section~\ref{sec:post:densities}.
The posterior 
probability of activation, that is,  the estimated posterior means of
the $x_i$s at each voxel, are displayed in Figure~\ref{fig1}. 
As is
customary in fMRI, the image is displayed using a radiological
view. Thus, the right-hand side of the brain is imaged as the
left-hand side.
\begin{figure}[h]
  \begin{center}
    \mbox{\subfloat[]{\includegraphics[width=0.33\textwidth]{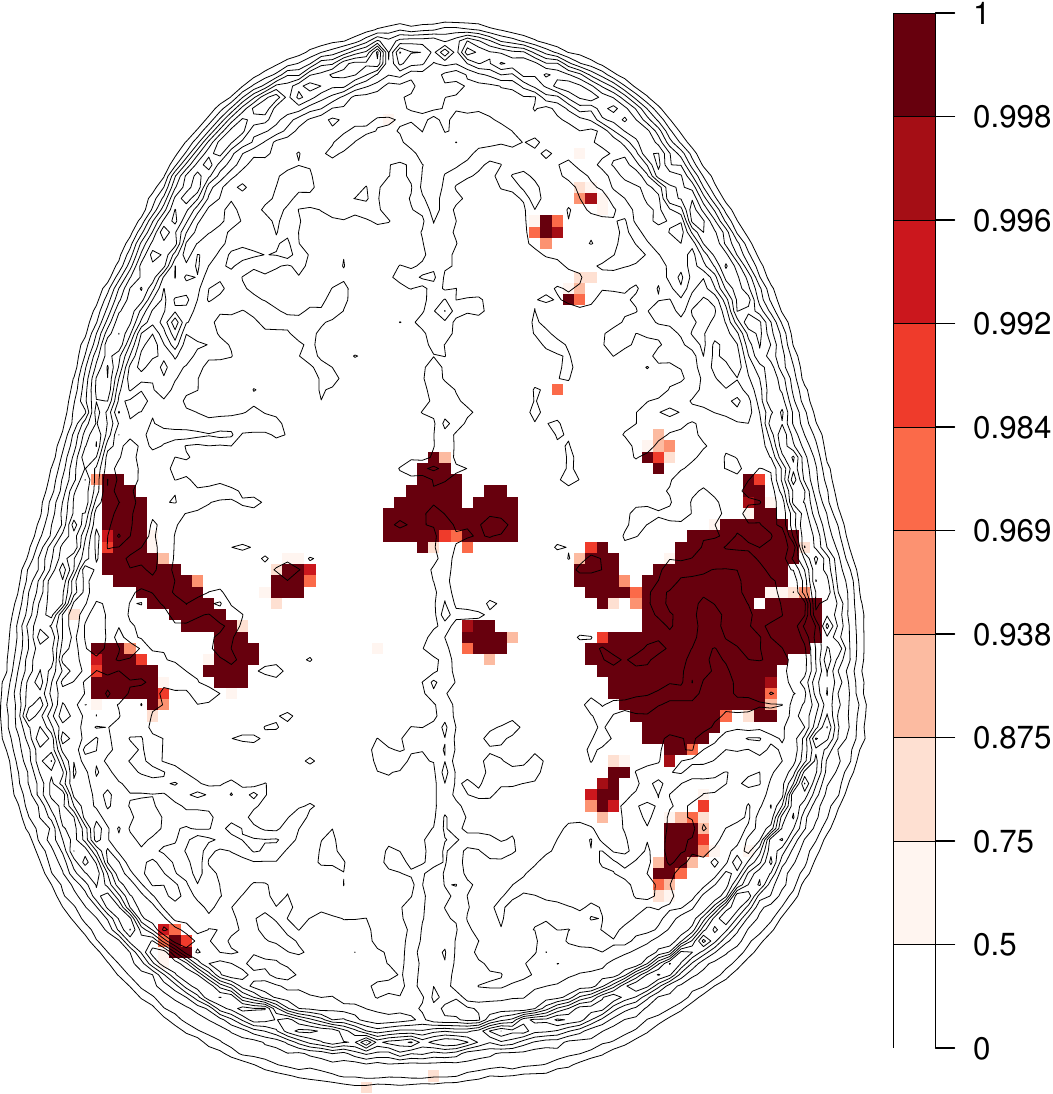}}
          \subfloat[]{\includegraphics[width=0.33\textwidth]{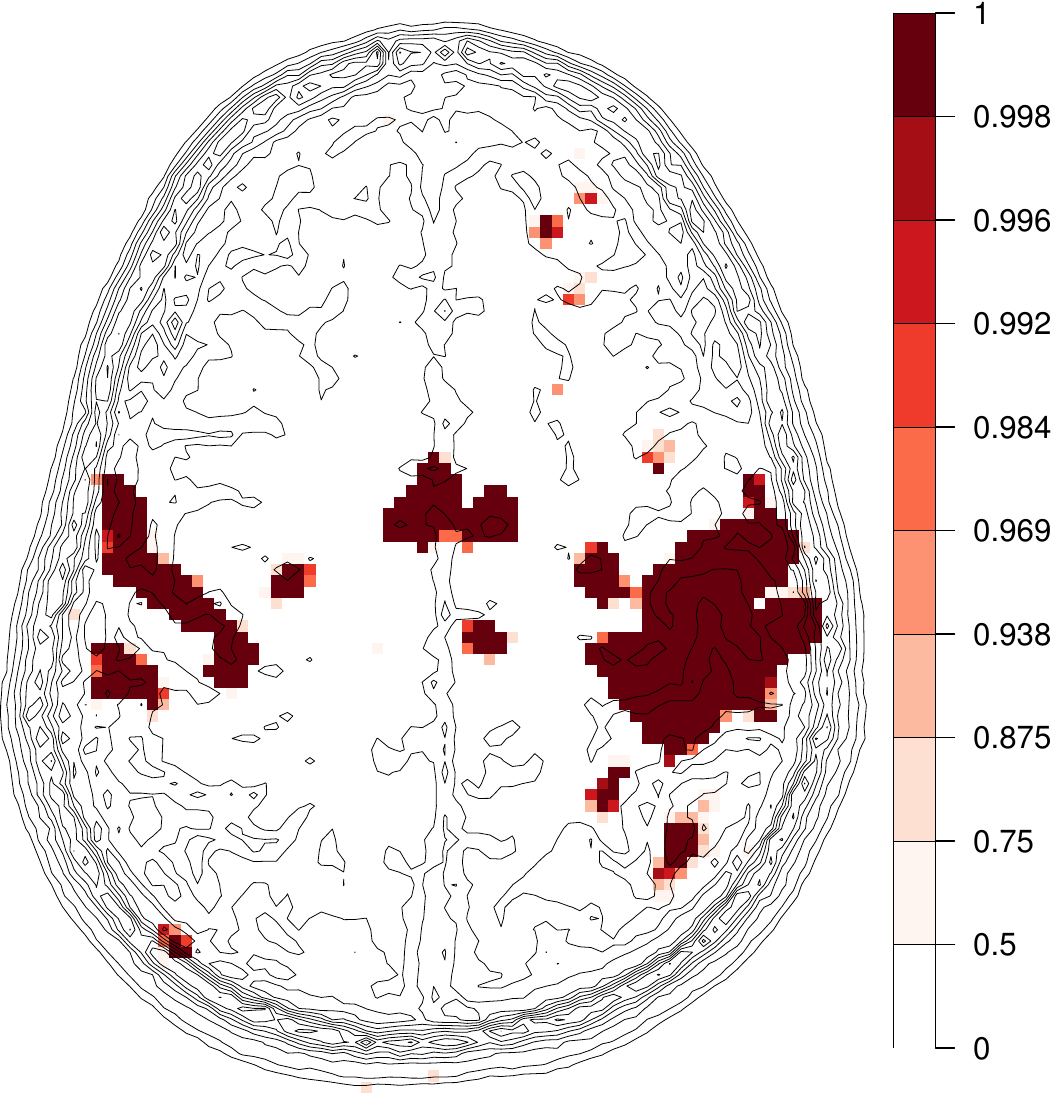}}
          \subfloat[]{\includegraphics[width=0.33\textwidth]{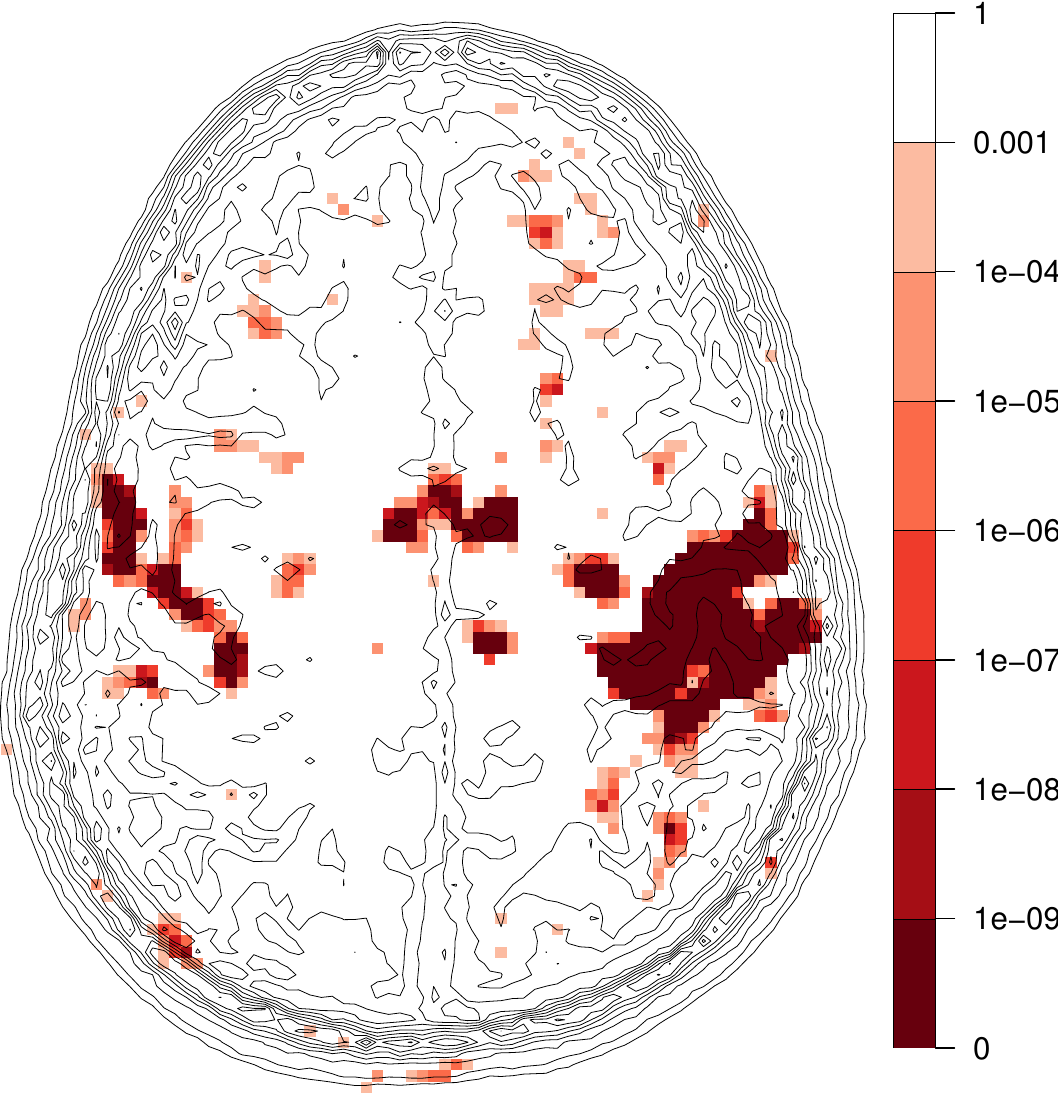}}}
  \end{center}
  \caption{
    The posterior probabilities of activation in the fMRI
    experiment using (a) Swendsen-Wang and (b) single-site
    updating. (c) Voxel-wise p-values
after cluster-wise thresholding for p-values $< 0.001$.
 Displays are in radiological view in the log scale.} 
  \label{fig1}
\end{figure}
In the figure, we only display posterior probabilities that are
greater than 0.5. 
It is clear that the
 posterior probabilities of
activation using either Swendsen-Wang or single-site updating are essentially
indistinguishable.
For comparison, we have provided the results obtained upon using the commonly-used cluster-wise thresholding of the $p$-values of the test statistic. Here, activation regions are detected by drawing clusters of connected components, each containing a pre-specified
number of voxels with $p$-values below a specified threshold~\citep{fristonetal94,hayasakaandnichols03}.  
To obtain our activation map,
we choose a 2-D second-order neighborhood, 
a threshold of 0.001  for the $p$-values, 
following the recommendations of \citet{wooetal14}, 
and a minimum cluster size of 4 pixels, as optimally recommended by \citet{fromanetal95} using the AFNI software~\citep{cox96,coxandhyde97,cox12}.
Although a detailed analysis of the results is
beyond the scope of this paper, we note from the Bayesian model
that there is very high
posterior probability of activation in the left primary motor 
(M1) and pre-motor (pre-M1) cortices and the supplementary motor
areas. There are also some areas on the right with high posterior
probability of activation, perhaps as a consequence of the left-hand
finger-tapping experiment that was also a part of the larger
experiment. 
While the activation maps using cluster-wise thresholding are generally similar to those obtained using Bayesian inference, there are many stray pixels determined to be activated. Moreover, unlike cluster-wise thresholding, Bayesian methods provide us with the posterior probability of activation and this can be used in informing further decisions.

\subsection{Establishing first order anisotropy in pistachio tree yields}
\label{sec:pistachios}
An unresolved central question in ecology is the degree to which
observed cyclic dynamics owe their spatial correlations to endogenous
or exogenous factors~\citep{nobleetal15,nobleetal18}.
Recent work~\citep{nobleetal15} demonstrated~\eqref{eq:Ising} as a
suitable model for describing long-range synchronization of
oscillations in spatial populations. In particular, the binarized yearly
increases/decreases in commercial pistachio ({\em Pistacia vera}) tree
yields, over a $66{\times}106$ grid from 2003 to 2007, shows spatial
patterns that can be fitted by an anisotropic Ising
model~\citep{nobleetal18}. However, the analysis carried out in~\citep{nobleetal18} could only
conclusively establish long-term correlations using the anisotropic
model with the binarized dataset for the annual
increases/decreases from 2003 to 2004. The authors used a
goodness-of-fit measure for 
pairwise correlation in the horizontal, vertical and diagonal
directions~\citet{nobleetal18}, with reference distributions
calculated by simulation. Our development in
Section~\ref{sec:anisotropic} enables us to easily perform a more
formal and principled likelihood ratio test (LRT), as we now demonstrate.

The data are the annual yields (rounded to the nearest pound) from 6,710 female trees at 6,996 locations on a
$66{\times}106$ grid~\citep{nobleetal18}, for the five-year period
of 2003--2007. Let $\bY_{(t)}$ be the yield of the trees over
the grid in 
year $t$, for $t\!\in\!\{2003,2004,2005,2006,2007\}$. If a tree is
male, or its yield is missing, the value at that grid-point is assumed
to be zero. Then, 
following~\citet{nobleetal18}, the first difference
field~\citep{boxandjenkins70} is $\bD_{(t)} = \bY_{(t+1)} {-}
\bY_{(t)} {-} (\bar Y_{(t+1)} {-} \bar Y_{(t)})$, where  $\bar Y_{(t)}$ is
the average of the yields in $\bY_{(t)}$. The binarized field is then
$\bX_{(t)}$ with $(i,j)$th element given by $\I(\bD_{ij,(t)} > 0)$.
\begin{figure}[h]
\vspace{-0.2in}
  \centering
  \mbox{
    \subfloat{\includegraphics[width=0.7\linewidth]{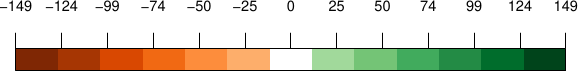}}}\setcounter{subfigure}{-0}
  \mbox{    
    \subfloat[2003-04]{\includegraphics[width=.36\linewidth]{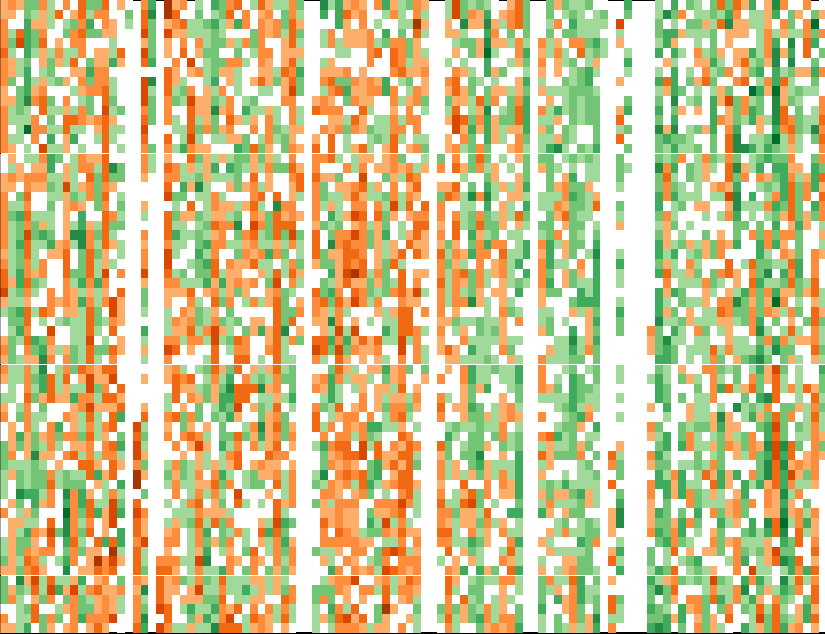}}
    \hspace{0.005\linewidth}
    \subfloat[2004-05]{\includegraphics[width=.36\linewidth]{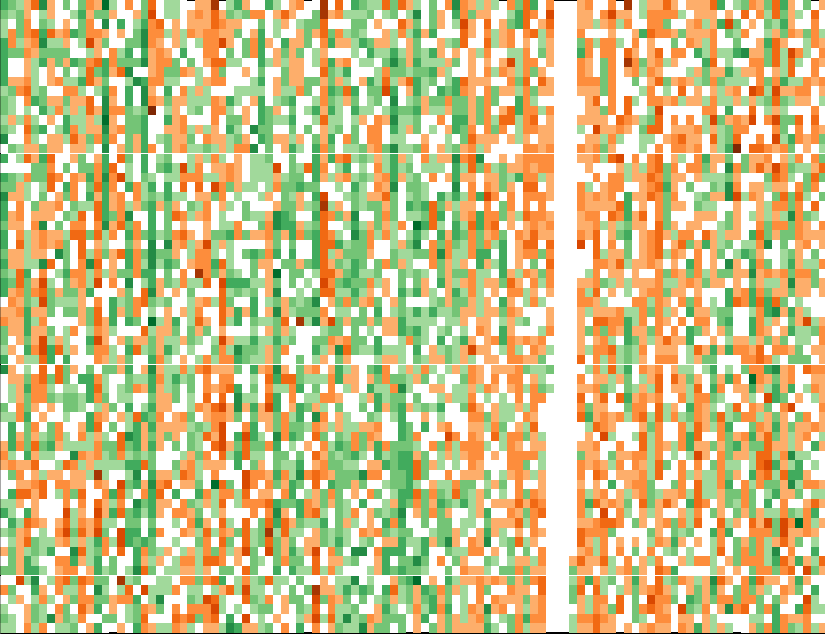}}
    \hspace{0.005\linewidth}
  }
  \mbox{
    \subfloat[2005-06]{\includegraphics[width=.36\linewidth]{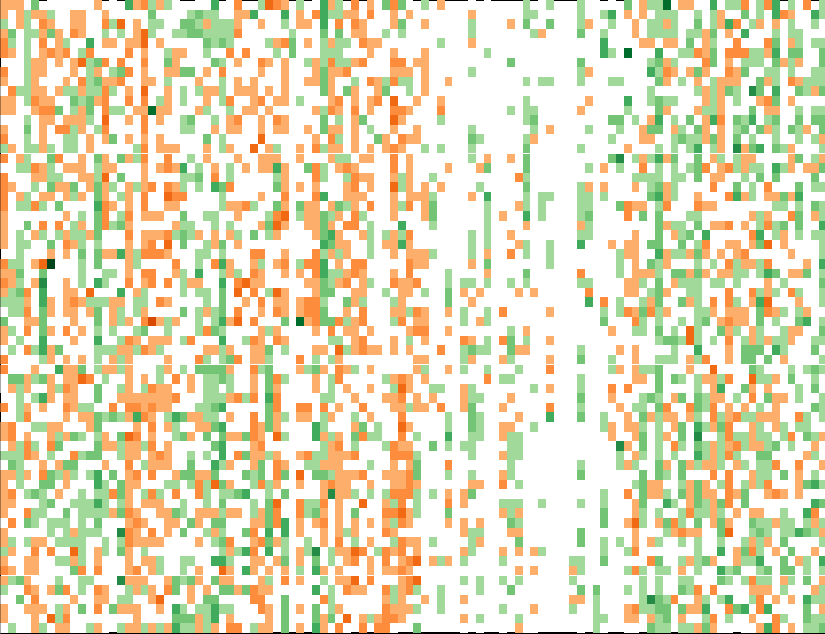}}
    \hspace{0.005\linewidth}
    \subfloat[2006-07]{\includegraphics[width=.36\linewidth]{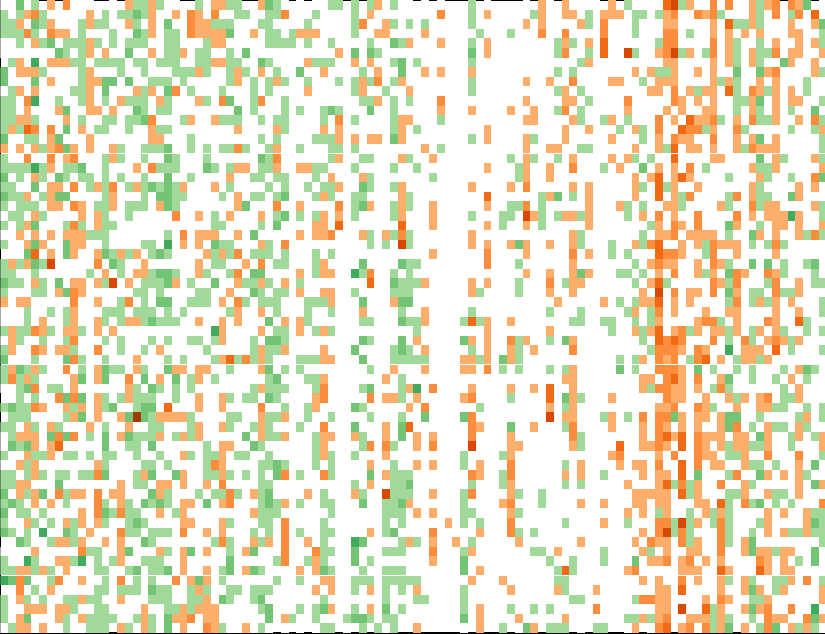}}
  }
  \caption{The differences in pistachio yield (in pounds) for female trees in the
    study area for the year (a) 2003--04, (b) 2004--05, (c) 2005--06
    and (d) 2006--07. Red colors indicate negative differences, while
    green    colors  indicate positive differences, with darker
    intensities in both cases corresponding to higher absolute
    differences. The values are zero for locations with male trees. The binarized version of these yields, that is, 
    binarized in terms of whether these differenced yields are
    positive (shaded green in each figure) or not, in each field
    is modeled by an Ising model.}
  \label{pistachiodiffs}
\end{figure}
Figure~\ref{pistachiodiffs} displays the successive differenced
yields, which were binarized for our analysis. Under the assumption of
an Ising distribution~\citep{nobleetal18}, we use a LRT to
serially test for independence,  neighborhood order, and anisotropy. 
The general form of the LRT statistic (LRTS) is the same for all these
tests, and is
\begin{equation}
  \Lambda =
  \frac{L(\widehat\alpha^0,\widehat\bbeta^0;\bx)}{L(\widehat\alpha^u,\widehat\bbeta^u;\bx)},
  \label{eq:LRTS}
\end{equation}
where $L(\alpha,\bbeta;\bx)$
is the likelihood function, and
equals the $\mathbb P(\bx;\alpha, \bbeta)$
of~\eqref{eq:pmf.iso} when the concerned hypothesis specifies an
isotropic Ising model, or is equal to the $\mathbb P(\bx;\alpha, \beta_p,
\beta_v)$ of ~\eqref{eq:pmf.aniso} when an anisotropic Ising model  is
specified by the hypothesis. Also $(\widehat\alpha^0,
\widehat\bbeta^0)$ are the maximum likelhihood estimates (MLEs)
under the null hypothesis while $(\widehat\alpha^u,\widehat\bbeta^u)$
are the MLEs under the more general model (comprising the higher
loglikelihood under the null and the alternative hypothesis specifications). We note that except
for the case of $\bbeta{=}0$, or independence between the $x_i$s, the
methods of Section~\ref{approximations} are needed to obtain
the MLE or the maximized likelihood, as well as to compute 
the $p$-value of each LRTS, which was
obtained by simulating 1,000 realizations of the field under the null
hypothesis. Indeed, even leaving aside the issue of computational costs, this is a scenario where it is not quite obvious
how one may use stochastic methods (such as MCMC) to obtain, for
instance, the partition function as a function of $\alpha$ or $\bbeta$, without recourse to estimating these quantities on a grid, and the interpolating other values. 
We now detail our investigations. 

Our first test is that of independence in the binarized
differenced yields at the sites. We consider the isotropic Ising
probability model~\eqref{eq:pmf.iso} for each
binarized differenced field $\bx$ and test  $H_0{:} \beta {=} 0$
against $H_a{:}\beta {\neq}0$, where the field is assumed
to follow~\eqref{eq:Ising}. In this case, with $\bar x\doteq
n^{-1}\sum_{i=1}^nx_i$, \eqref{eq:LRTS} has the
numerator $\bar x^{n\bar x}(1{-}\bar x)^{n-n\bar x},$
while the denominator is optimized
numerically. The $p$-value of the LRTS for each binarized differenced
yields is negligible ($p  {<} 10^{-3}$) in all four cases, and so the
null hypothesis of independence is rejected in each of these cases in
favor of the isotropic Ising model. We now evaluate anisotropy in
these binarized differenced fields.

Table~\ref{table:pistachio-tests} provides the MLEs of the parameters
upon fitting a first-order anisotropic model~\eqref{eq:pmf.aniso} to
each of the binarized differenced  
yields. In the table, $\beta_p$ is the interaction parameter in the
horizontal direction, while   $\beta_v$ is the interaction parameter in the vertical
direction.
The LRT shows significant evidence of anisotropy for each
binarized differenced field. Having established anisotropy, and in
light of the displays in Figure~\ref{pistachiodiffs} we also tested for significance of the horizontal
interaction parameter ($\beta_p$). We see
(Table~\ref{table:pistachio-tests}) that $\beta_p$ is not significant in any of the binarized
differenced fields. Finally, we note that \citet{nobleetal18} 
found
a positive $\alpha$ only in one of the four fields, while
our more formal MLEs show positive $\alpha$ estimates in both the
middle differenced 
yields. All $\alpha$s, barring the one obtained from the third
differenced field, are significant at the 5\% level of significance. 
\begin{table}[h]
  \begin{center}
  \caption{Parameter estimates for the binarized yearly differenced images of
    pistachio yields ($\bx$), along with the results of testing for
    significance. Column 5 displays the $p$-values when testing for the
    significance of $\alpha$, Column 6 for when testing first order
    isotropy against first order anisotropy, while Column 7 provides
    the $p$-values for when testing for a horizontal interaction
    parameter in the first order anisotropic model.}
    \label{table:pistachio-tests}
    \begin{tabular}{c|clc|ccc} \hline
      &   \multicolumn{3}{c|}{Parameters}
      &\multicolumn{3}{c}{$p$-value of hypothesis tests}

                                              \\     
   $\bx$  &  $\alpha$    &  $\beta_v$  & $\beta_p$ & $H_0{:}\alpha{=}0$ &
                                                           $H_0{:}\beta_p{=}\beta_v$
                                                   &
                                                     $H_0{:}\beta_p{=}0$\\ \hline
      I      &    -0.100   &  0.908    & 0.319   & 0.025 & $ < 10^{-3}$   &  0.395 \\
      II     &    0.035   &  0.967  & 0.315   &  0.028 & $<10^{-3}$     &  0.823   \\
      III    &    0.036   &  0.948    & 0.353   & 0.091 & 0.002           &  0.498  \\   
   IV     &    -0.069   &  0.823    & 0.374  &   0.026 & $ < 10^{-3}$  &  0.951  \\ \hline   
  \end{tabular}
\end{center}
\end{table}
The results of our likelihood-based approach extend the findings
of~\citet{nobleetal18}, by establishing anisotropy in all the binarized
differenced yields. Specifically, previous work~\citep{nobleetal18} was able to
conclusively establish 
anisotropy only for the 2003-04 differenced yields: our approach shows
that a first order anisotropic Ising model provides a significantly
better fit in all four cases. Further, we show that there is no
significant interaction effect in the  horizontal direction in either of
these binarized differenced yields.

\section{Discussion}
\label{sec:discussion}
In this paper, we provided explicit approximations for difficult-to-calculate quantities derived from the homogeneous Ising model. In particular, 
we developed approximations to the partition function, and the moments that are otherwise
intractable even for moderate-sized graphs.
An R~\citep{R} package implementing our approximation formulae is under
development and will be released to accompany this paper.
We showed that our approximation works well for very realistic 
lattice sizes and neighborihood structures in the lattices.
We  stress that our approximations apply to general regular
graphs, not necessarily lattices,
and with general neighborhood structure. Indeed, its derivation
does not use any special graph structure, and  only supposes that the graph
is regular, that is, that the degree $k$ of each vertex is the same for
the entire graph. Recall from the introduction 
that many researchers have had to simplify their models because of the
difficulty in obtaining quantities such as the partition function of
more realistic models. 
Further, unlike the restrictive special cases
(such as those considered  by \citet{kaufman49,schraudolphandkamenetsky09,karandashevandmalsagov17})
where calculation of  the exact partition function is possible, our
approach does not grow with the size of the graph and can apply to all
dimensions as long as the graph structure is regular. 
Therefore we expect that our approximation will
facilitate   the Ising modeling of complex systems by allowing fast and reliable
inference. An example of such a situation is a fully Bayesian approach
to activation detection in fMRI, which we have demonstrated in
Section~\ref{fmriapp} can be done quite speedily using our
approximations. (This is because our method provides formulae for $Z(\alpha,\beta)$, which would otherwise have to be estimated separately and individually, using stochastic methods for each combination of $\alpha,\beta$ as needed in the MCMC. In essence therefore, our approach when used in fully Bayesian estimation involving a homogeneous Ising model prior,  obviates the need for a second layer of MCMC to estimate $Z(\alpha,\beta)$ for each $(\alpha,\beta)$, and therefore can be used in conjunction with the fastest of stochastic algorithms to further speed up computations.) 
A second application of our methodology, demonstrated in Section~\ref{sec:pistachios},
establishes anisotropy in differenced pistachio tree yields using a
formal LRT framework. The LRTS requires ML estimation of the parameters
under the null and the more general hypothesis of the null or the
alternate hypotheses, and it is our approximate but accurate formulae
developed here that make it possible to implement a formal LRT in this
application.  

A few more comments are in order. 
The  
normal approximations of Section~\ref{approximations} can be
computed in ${\cal O}(n)$ operations. However, further approximations
obtained by replacing the sums by integrals may
be computed much faster depending on the integration method used.
We note that the analytical estimates for the mean activation and spin
interaction work better for large $k$ because the normal 
approximation of the distribution of the mean of  $r_h {=} {\sum_j} \eta_{i_h j} x_j$
is more suited for large $k$.

A possible extension pertains to the case of approximations for the
nonhomogeneous Ising model, which is important in a number of applications, such as in item response theory~\citep{vanborkuloetal14}. We feel that it is generally difficult to specify approximation formulae for nonhomogeneous Ising model in the abstract, and that this will depend on the exact specification of the model. However, we also believe that our approximations in this paper 
will be straightforward to generalize to a locally varying interaction parameter $\beta_{ij}$ if 
we could somehow decompose the graph into regular components (homogeneous regions
with approximately constant degree) where the $\beta_{ij}$s do not
change much. 
In this case, the error in the approximation would probably depend on 
the size of each component, say $n_c$ which are necessarily  assumed
to be  large for our potential approximations to hold.

We close this section with one last comment on the fMRI
application. One aspect that has so far not been invoked in fMRI is
the fact that it is known that only a very small proportion of about 0.5-2\% of voxels are activated in a typical fMRI
study~\citep{lazar08,chenandmaitra23}. However, this information has never been
incorporated satisfactorily in the context of Bayesian activation
detection of fMRI. Our approximations in Section~\ref{approx:M} make
it possible to perform Bayesian inference while constraining the prior
parameters $(\alpha,\beta)$ so that the {a priori} proportion of
expected activated voxels is satisfied. Thorough development and
implementation of such methodology for this application would be of
great practical interest for reliably assessing cognition. Thus, we
see that while we have addressed an important problem in this paper,
there remain issues meriting further attention.

\ifCLASSOPTIONcaptionsoff
  \newpage
\fi

\renewcommand\thefigure{S\arabic{figure}}\setcounter{figure}{0}
\renewcommand\thetable{S\arabic{table}}\setcounter{table}{0}
\renewcommand\thesection{S\arabic{section}}\setcounter{section}{0}
\renewcommand\theequation{S\arabic{equation}}\setcounter{equation}{0}
\section*{Supplementary Materials}
\section{Supplement to Section~\ref{approximations}}
\label{supp:approximations}
The approximation of $\E(M)$ is
\begin{equation*}
  \begin{split}
  \tilde{M}_{\phi}(\alpha, \beta_p, \beta_v) & = \tfrac{1}{Z_{\tilde\phi}(\alpha, \beta_p, \beta_v) }
  \tfrac{\partial}{\partial \alpha }Z_{\tilde\phi}(\alpha, \beta_p, \beta_v) \\
 & =
   \tfrac{1}{Z_{\tilde\phi}(\alpha, \beta_p, \beta_v) } \biggl[ n
   \exp{(\alpha n)} + n \exp{(\alpha'')} \\
& \qquad\qquad\qquad\qquad + n(n-1)  \exp{\{\alpha (n-2) + \alpha''\}} 
   + C_{2, \phi}(\alpha, \beta_p, \beta_v) 
    + n\sqrt{\tfrac{n}{2\pi}}    \int_{2/n}^{1 - 2/n}
    \tfrac{\Delta_{\Phi_2}(ny) \exp\{ g_2(ny)\} }{
    (1-y)^{n(1-y) + \frac{1}{2}} y^{ny - \frac{1}{2} }}
    dy \biggr],
  \end{split}
\end{equation*}
where
$C_{2, \phi}(\alpha, \beta_p, \beta_v) ={n \choose 2} \bigl(\tfrac{n-2}{2}
 \exp\{g_2(n-2)\}\Delta_{\Phi_2}(n-2) + \exp\{g_2(2)\}\Delta_{\Phi_2}(2) \bigr)$.
 The approximation of $\E(S_p)$ is
 \begin{multline*}
   S_{p, \tilde\phi}(\alpha, \beta_p, \beta_v) =
  \tfrac{(n k_p/2)}{Z_{\tilde\phi}(\alpha, \beta_p, \beta_v) } \biggl[
  \exp{(\alpha n)} +  (n-2) \exp{\{\alpha (n-2) + \alpha''\}} + D_{2, p, \phi}(\alpha, \beta_p, \beta_v) 
    + \sqrt{\tfrac{n}{2\pi}}    \int_{2/n}^{1 - 2/n}
    \tfrac{ \tilde{H}_p(ny) \exp\{ g_2(ny)\} }{
    (1-y)^{n(1-y) + \frac{1}{2}} y^{ny - \frac{3}{2} }}
    dy \biggr], 
\end{multline*}
 where
 $ n k_p D_{2, p, \phi}(\alpha, \beta_p, \beta_v) =  {n\choose 2} \big[ H_p(n-2) \exp\{g_2(n-2)\} + H_p(2)\exp\{g_2(2)\}\big]$,
 with $\tilde{H}_p(\ell) = H_p(\ell)/( 2\ell_2\theta_p)$, and
 $H_p(\ell)  = \bigl\{ \Delta_{\Phi_2}(\ell) g_{2, p}(\ell) + \Delta_{\Phi_2, p}(\ell) \bigr\}, $
$ g_{2, p}(\ell) = \mu_{p, \ell} + \beta_p \tau_{p,\ell}^2 + \beta_v \rho_{pv,\ell}\tau_{p,\ell} \tau_{v,\ell}, $,
   and  $$\Delta_{\Phi_2, p}(\ell) = - \tau_{p,\ell} \biggl\{ \phi\bigl( \tfrac{u^\bullet}{\tau_{p,\ell}}\bigr) \Delta_{2,\Phi}(u^\bullet) 
 -   \phi\bigl( \tfrac{u_\bullet}{\tau_{p,\ell}}\bigr) \Delta_{2, \Phi}(u_\bullet) \biggr\} 
 - \tfrac{\rho_{pv,\ell}}{\sqrt{1 - \rho_{pv, \ell}^2}} \int_{u_\bullet}^{u^\bullet} \phi\bigl(\tfrac{u}{\tau_{p,\ell}}\bigr) \Delta_{2,\phi}(u)
 \, du,
$$
 with
$\Delta_{2,\Phi}(u) = 
\Phi\biggl( \tfrac{ v^{\bullet} - u\rho_{pv, \ell}\tfrac{\tau_{v,\ell}}{\tau_{p,\ell}}}{\tau_{v,\ell} \sqrt{ 1 - \rho_{pv, \ell}^2} } \biggr)
- \Phi\biggl( \tfrac{ v_{\bullet} - u\rho_{pv, \ell}\tfrac{\tau_{v,\ell}}{\tau_{p,\ell}}}{\tau_{v,\ell} \sqrt{ 1 - \rho_{pv, \ell}^2} } \biggr),$ and
$\Delta_{2,\phi}(u) =
\phi\biggl( \tfrac{ v^{\bullet} - u\rho_{pv, \ell}\tfrac{\tau_{v,\ell}}{\tau_{p,\ell}}}{\tau_{v,\ell} \sqrt{ 1 - \rho_{pv, \ell}^2} } \biggr)
      - \phi\biggl( \tfrac{ v_{\bullet} - u\rho_{pv, \ell}\tfrac{\tau_{v,\ell}}{\tau_{p,\ell}}}{\tau_{v,\ell} \sqrt{ 1 - \rho_{pv, \ell}^2} } \biggr).$

Proceeding in exactly the same manner, we have the approximation of  $\E(S_v)$:
\begin{equation*}
  \begin{split}
  S_{v, \tilde\phi}(\alpha, \beta_p, \beta_v) & = \tfrac{1}{2 Z_{\tilde\phi}(\alpha, \beta_p, \beta_v) }
  \tfrac{\partial}{\partial \beta_v }Z_{\tilde\phi}(\alpha, \beta_p,
                                                \beta_v) \\   & =
  \tfrac{1}{Z_{\tilde\phi}(\alpha, \beta_p, \beta_v) } \biggl\{
  \tfrac{n k_v}{2} \exp(\alpha n) + D_{2, v, \phi}(\alpha, \beta_p, \beta_v)  
    + \tfrac{1}{2} \sqrt{\tfrac{n}{2\pi}}    \int_{2/n}^{1 - 1/n}
    \tfrac{H_v(ny) \exp\{ g_2(ny)\} }{
    (1-y)^{n(1-y) + \frac{1}{2}} y^{ny + \frac{1}{2} }},
    dy \biggr\},
    \end{split}
 \end{equation*}
 where
 $2 D_{2, v, \phi}(\alpha, \beta_p, \beta_v) = n H_v(n-1) \exp\{g_2(n-1)\} + {n \choose 2}H_v(2)\exp\{g_2(2)\}$,
 with 
 $H_v(\ell)  = \bigl\{ \Delta_{\Phi_2}(\ell) g_{2, v}(\ell) + \Delta_{\Phi_2, v}(\ell) \bigr\}, $
 where
 $g_{2, v}(\ell) = \mu_{v, \ell} + \beta_v \tau_{v,\ell}^2 + \beta_p \rho_{pv,\ell}\tau_{p,\ell} \tau_{v,\ell},$
 and
 $$ \Delta_{\Phi_2, v}(\ell)=  -
 \tau_{v,\ell} \biggl\{ \phi\bigl( \tfrac{v^\bullet}{\tau_{v,\ell}}\bigr) \tilde\Delta_{2,\Phi}(v^\bullet) 
 -   \phi\bigl( \tfrac{v_\bullet}{\tau_{v,\ell}}\bigr) \tilde\Delta_{2, \Phi}(v_\bullet) \biggr\} 
 - \tfrac{\rho_{pv,\ell}}{\sqrt{1 - \rho_{pv, \ell}^2}} \int_{v_\bullet}^{v^\bullet} \phi\bigl(\tfrac{v}{\tau_{v,\ell}}\bigr) \tilde\Delta_{2,\phi}(v)
 \, dv,$$
 where
$\tilde\Delta_{2,\Phi}(v) = 
\Phi\biggl( \tfrac{ u^{\bullet} - v\rho_{pv, \ell}\tfrac{\tau_{p,\ell}}{\tau_{v,\ell}}}{\tau_{p,\ell} \sqrt{ 1 - \rho_{pv, \ell}^2} } \biggr)
- \Phi\biggl( \tfrac{ u_{\bullet} - v\rho_{pv, \ell}\tfrac{\tau_{p,\ell}}{\tau_{v,\ell}}}{\tau_{p,\ell} \sqrt{ 1 - \rho_{pv, \ell}^2} } \biggr),$ and
$\tilde\Delta_{2,\phi}(u) =
\phi\biggl( \tfrac{ u^{\bullet} - v\rho_{pv, \ell}\tfrac{\tau_{p,\ell}}{\tau_{v,\ell}}}{\tau_{p,\ell} \sqrt{ 1 - \rho_{pv, \ell}^2} } \biggr)
      - \phi\biggl( \tfrac{ u_{\bullet} - v\rho_{pv,
          \ell}\tfrac{\tau_{p,\ell}}{\tau_{v,\ell}}}{\tau_{p,\ell}
        \sqrt{ 1 - \rho_{pv, \ell}^2} } \biggr).$

\section{Supplement to Section~\ref{simulations}}
\subsection{Normalizing constant for a 2-NN graph}
\label{sec:appendix:1NN}
Consider the circular 2-NN Ising model with variables
$\{ v_1,\ldots, v_n\} \subset \{ -1,1\}$ (in this case,
$v_n$ is a neighbor of $v_1$).
The corresponding normalizing constant is given by
$$Z_{1}(L, K) = \displaystyle\sum_{\{v\}} \exp\{ L \displaystyle\sum_{i} v_i +  K (v_nv_1 + \displaystyle\sum_{i=1}^{n-1} v_i v_{i+1}) \}.$$
A well-established result \citep[Chapter 13, p. 261]{Salinas-2001} says that for large $n$,
\begin{multline}
\log Z_{1} (L, K) {\approx} n \log\left\{  \exp(K ) \cosh(L){+} \sqrt{ \exp(2K ) \cosh^2(L) {-} 2\sinh(2K)} \right\}.
\label{eq:z1}
\end{multline}
In our setup, the variables $x_i \in \{0,1\}$. So we need to transform
the variables $v_i$ to $x_i$. It is easy to see that the corresponding
transformation is $x_i = (1 + v_i)/ 2$.
so that
\begin{equation*}
  \begin{split}
    Z_{1}(L, K) & = \displaystyle\sum_{\{x\}} \exp\left[ L \displaystyle\sum_{i} (2x_i-1) +  K \bigg\{ (2x_n-1)(2x_1-1) + \displaystyle\sum_{i=1}^{n-1} (2x_i-1)(2 x_{i+1}-1 ) \bigg\} \right]  \\
                & =  \exp[ n( K - L) ]                                                                               \displaystyle\sum_{\{x\}} \exp\left[
2(L-2K) \displaystyle\sum_{i} x_i 
+ 4K ( x_nx_1 + \displaystyle\sum_{i=1}^{n-1} x_ix_{i+1} )  \right].\\
    \end{split}
  \end{equation*}

From here, we get that
$Z_1( K, L) = Z( 2(L - 2K), 2K ) \exp\{n(K-L)\}$,
where the $4K$ is replaced by $2K$ because in our model the sum of over the
neighboring vertices is multiplied by two.
Therefore, using the fact that the one-nearest-neighbor graph is a regular graph with $k=2$, we get
$K = \beta/2$, and $L= \alpha/2$.
Consequently,
$Z(\alpha, \beta) = \exp\{ (\alpha - \beta) (n /2 ) \} Z_1( \alpha/2, \beta/2 ).$
Using \eqref{eq:z1}, we obtain for large $n$,
\begin{multline*}
  {\log Z(\alpha, \beta) 
 \approx
\frac {n(\alpha - \beta )}2 
+ n\log \biggl[ \exp\left(\tfrac\beta2 \right) \cosh( \alpha/2) + \sqrt{ \exp(\beta ) \cosh^2\left(\tfrac\alpha2\right) - 2\sinh(\beta)}\biggr].}
\end{multline*}

      \section{Supplement to Section~\ref{application}}
      
\subsection{Posterior densities for MCMC simulations}
\label{sec:post:densities}
  \begin{enumerate}
\item {\bf The parameter $x_i$'s}: Let $x_{(-i)}=\{x_1, \ldots,  x_n\}\setminus \{x _i \}$. We propose to update  $x_i\in \{0, 1\}$
  via Gibbs' sampling. The full conditional of the posterior
  distribution of $x_i$, $p(x_i\mid x_{(-i)},\Theta_{-x},p)$ is proportional to
\begin{equation*}
  \biggl\{ \bigl(
    C_B\bigr)^R  \prod_{r=1}^R  p_{ri}^{\mu\psi} (1
    -p_{ri})^{(1-\mu)\psi} \biggr\}^{x_i}  \exp\{ \alpha x_i - \beta \sum_{j\sim i} (1 -\delta_{ij}) \},
\end{equation*}
with
$C_B = \Gamma(\psi)/ \{ \Gamma(\mu\psi) \Gamma((1-\mu)\psi)\}$.
Let $A_i = \alpha + (\mu\psi -1) \sum_{r=1}^R \log p_{ri} + \{(1-\mu)\psi -1\}
\sum_{r=1}^R \log ( 1- p_{ri}) + R\log C_B.$
Then $\sum_{r=1}^R \log p_{ri}$ is
the same as $R\log \check p_i$ where  $\check p_i$ is the  harmonic mean of
$\{p_{ri}:r=1,2,\ldots,R\}$. Also, $\sum_{r=1}^R \log ( 1-
p_{ri})/R\equiv  \log\check q_i$  where $\check q_i$ is the harmonic mean of
$\{1-p_{ri}:r=1,2,\ldots,R\}$. Then 
$A_i \equiv \alpha + (\mu\psi -1) R \log( \check{p}_i ) + \{(1-\mu)\psi -1\}
R \log( \check{q}_i )  + R\log C_B$
and 
\begin{equation*}
\operatorname{Pr}( X_i= 1 | x_{(-i)},\Theta_{-x},p) \propto 
\tfrac{ \exp( A_i - \sum_{j\sim i}\beta) }
{\exp( A_i  - \sum_{j\sim i}\beta     )
+ \exp\{ - 2(\beta \sum_{j\sim i} x_{j}) \}  }
\end{equation*}

\item {\bf The parameter  $\beta$:} We have
$f( \beta | \Theta_{-\beta}, p) \propto { \exp\{ - \beta \sum_{i \sim j} (1 -\delta_{ij}) \} }/{Z( \alpha, \beta)}$,
  where $Z( \alpha, \beta)$
  will be
approximated numerically 
by $\tilde{Z}_\phi( \alpha, \beta)$ as defined in Section~\ref{approx:Z}.
We let $\ddot\beta\sim$Gamma$(a,b)$ as our proposed update to $\beta$ with
$a/b = \beta$, and $a /b^2 = \gamma$, for some moderate $\gamma>0$
(e.g., $\gamma = 1$).  
This yields the \citet{hastings70} acceptance ratio that is the
minimum of 1 and 
\begin{equation*}
\tfrac{\tilde Z_\phi(\alpha,\beta)}{\tilde Z_\phi(\alpha,\ddot\beta)}
\tfrac{\Gamma ({\beta^2}/{\gamma})}{\Gamma({\ddot\beta^2}/{\gamma} )} 
\exp\biggl\{ -(\ddot\beta - \beta) \sum_{i\sim j} (1 - \delta_{ij})  
+ \frac1\gamma( \ddot\beta^2 - \beta^2)
\log(\beta\ddot\beta/\gamma)  \biggr\}.
\end{equation*}
\item {\bf The parameter  $\alpha$:} The full conditional for $\alpha$
  is given by
\[f( \alpha | \Theta_{-\alpha}, p) \propto
 \exp(  \alpha \sum_{i=1}^n x_{i} )/ Z( \alpha, \beta).\]
 Our proposed update to $\alpha$ is $\ddot\alpha\sim
N(\alpha, \sigma_\alpha^2)$ with a moderate $\sigma^2_\alpha$, (e.g.,
$\sigma^2_\alpha = 1$), which, after incorporating the approximation in
Section~\ref{approx:Z}.
is accepted in a~\citep{metropolisetal53}
step with probability given by
$\min\left\{1,
{\tilde Z_\phi(\alpha,\beta)}\exp\{ (\ddot\alpha - \alpha) \sum_{i=1}^n x_{i} \} /{\tilde Z_\phi(\ddot\alpha,\beta)}\right\}$.

\item {\bf The parameter  $\psi$:} The full conditional for $\psi$ is given by
  \begin{equation*}
  f( \psi | \Theta_{-\psi}, p ) \propto 
\biggl[ \tfrac{\Gamma(\psi)}{ \Gamma(\mu\psi) \Gamma([1-\mu]\psi)}
  \biggr]^{R \sum_{i}\! x_i} \psi^{\zeta - 1} \exp[ \psi \{ \mu\log
  A + (1-\mu) \log B  - \lambda \} ],
\end{equation*}
where $A= \prod_{i=1}^n \check{p}_i^{R x_i}$,
and $B = \prod_{i=1}^n \check{q}_i^{R x_i}$.
Using \citet{stirling1730}'s approximation to the factorial function,
the right-hand-side of the above equation can be approximated by
\begin{equation}
  \psi^{\frac{1}{2}n_1 + \zeta -1} \! \exp\biggl[  -\psi \{ \lambda + \operatorname{Ent}(\mu) 
    - \mu\log A - (1-\mu) \log B  \} \biggr],
\label{eq:psi}
\end{equation}
where $n_1= R \sum_{i=1}^n x_i$, and 
 $\operatorname{Ent}(\mu)= \mu \log \mu + (1-\mu) \log(1-\mu)$ is the negative of the entropy associated with the probabilities
$(\mu,  1 -\mu)$.
We update $\psi$ with the proposal
$\ddot\psi\sim$Gamma$( \tfrac{1}{2}n_1 + \zeta,
\lambda + \operatorname{Ent}(\mu) - \mu\log A - (1-\mu) \log B ).$ 
More generally, we can reduce this updating step to a Gibbs'
sampling step using the above approximated full conditional. 
The acceptance probability for $\ddot\psi$ is the minimum of 1 and 
\begin{multline*}
  \left(\tfrac{\Gamma(\ddot\psi)\, \Gamma(\mu\psi) \Gamma((1-\mu)\psi)}
       {\Gamma(\mu\ddot\psi) \Gamma((1-\mu)\ddot\psi)\,
            \Gamma(\psi)} \right)^{R \sum_{i}\! x_i}  \biggl(\tfrac{\psi}{\ddot\psi}\biggr)^{\frac{1}{2}n_1 + \zeta -1} \biggl(\tfrac{\ddot\psi}{\psi}\biggr)^{\zeta - 1} \!\!\!\!
\exp\bigl\{ (\ddot\psi -\psi)( \mu\log A + (1-\mu) \log B  - \lambda ) \bigr\} \biggr. \\
\times \exp\bigl[ -  \{ \lambda - \operatorname{Ent}(\mu) - \mu\log A -
(1-\mu) \log B  \} (\psi - \ddot\psi) \bigr],
\end{multline*}
which using~\eqref{eq:psi} yields the approximated acceptance ratio 
$
\min\{1, 
\exp[ (\psi - \ddot\psi) n_1 \{ \mu \log\mu + (1-\mu) \log( 1 - \mu)
\} ] \}$,
which is 1 whenever $\ddot\psi > \psi$ and so proposals $\ddot\psi >
\psi$ are always accepted.

\item {\bf The parameter  $\mu$:} Here, we use a random walk update
  $\ddot\mu \sim \operatorname{U}(0,1)$. This yields the 
  Metropolis-Hastings acceptance ratio that is the minimum of 1 and 
\begin{equation*}
\biggl(\tfrac{\Gamma(\psi)}{\Gamma(\ddot\mu\psi) \Gamma((1-\ddot\mu)\psi)}
\tfrac{\Gamma(\mu\psi) \Gamma((1-\mu)\psi)}{\Gamma(\psi)} \biggr)^{n_1} \, \\
\exp\biggl\{ \psi \ddot\mu \log A + \psi (1 -\ddot\mu) \log B \\
- \psi \mu \log A - \psi (1-\mu)  \log B \biggr\}.
\end{equation*}
\end{enumerate}

\bibliographystyle{IEEEtran}
\bibliography{references_fMRI,references_appl}

\end{document}